\newtheorem{thm}{Theorem}[section]
\theoremstyle{definition}
\theoremstyle{remark}
\numberwithin{equation}{section}
\newcommand{\Tr}{\mbox{tr}}
\begin{document}


\title{Numerical exploration of a hexagonal string billiard}


\author{Hans L. Fetter}
\address{Departamento de Matem\'aticas, Universidad Aut\'onoma
Metropolitana--Iztapalapa, M\'exico, D.F., M\'exico}
\email{hans@xanum.uam.mx}





\begin{abstract}
In this paper we are interested in the motion  of a ball inside a
   billiard table bounded by a particular smooth  curve.
 This table  belongs to a family of billiards  which can all be
  drawn  by a common process: the so--called gardener's string construction.
The classical elliptical billiard is,  of course, the foremost
member of this family. So it should  come as no surprise that our
hexagonal string billiard shares many basic properties with the
latter, but, on the other hand, also exhibits some essential
differences with it.
\end{abstract}


 \maketitle



\section{Background}\label{intro}
Let us  consider the motion of a point inside a plane billiard table
bounded by a closed convex curve. This point will  always move along a
straight line until it hits the boundary where it is reflected
according to the well known principle: the angle of reflection is
equal to the angle of incidence.

In order  to get a  billiard table with a sufficiently smooth boundary
we can use the familiar method known as a the gardener's or the string construction. 
First we need to choose some  convex polygon $K$  and then we proceed to 
 wrap a loop of inelastic string around it, pulling  the string  tight at a point $P$
  and then moving  this  point $P$ around. 
 Note that when this technique is applied  to  a closed line segment one
 obtains  an  ellipse. 
 Much  is known about the billiard problem inside an ellipse:
  see for instance the books by Chernov and Markarian\cite{Chernov} and Tabachnikov\cite{Tabachnikov},
and also the articles by Berry\cite{Berry1981},
Korsch and Zimmer\cite{Korsch2002} and Acquistapace\cite{Acquistapace1984} just to name a few.

Another choice  for the convex polygon which has received a fair amount of attention
 is that when  $K$  is an equilateral triangle.
 References include Hubacher  \cite{Hubacher1987}, Gutkin and Katok \cite{GutkinKatok1995},
 Gutkin and Knill\cite{GutkinKnill1996}, Turner \cite{Turner1982}.
Here again the result is  a convex domain  whose boundary is  comprised of  portions of ellipses. 

In   Fig. \ref{lab1} we illustrate the gardener's construction of a table $\Omega$ for the case when  $K$ is  a  regular hexagon.
The point $P$, as it  moves   around,  traces out a  curve  $\partial \Omega$ which again consists of a certain number of  elliptical arcs.

  Varying the length $l$ of the inelastic string one gets, of course, a whole family of billiard tables.
  We need to restrict our attention, however, to only one special value of $l$. This choice is dictated
  by the smoothness we get for the boundary $\partial \Omega$. 
   So it is for it only for this particular case 
  that we want to obtain a description, as complete as possible,  of all the  trajectories
as the billiard ball (point particle)  bounces elastically  off the walls. 
 
\begin{figure}[htp]
\centering
\includegraphics[height=2.4in,width=2.4in]{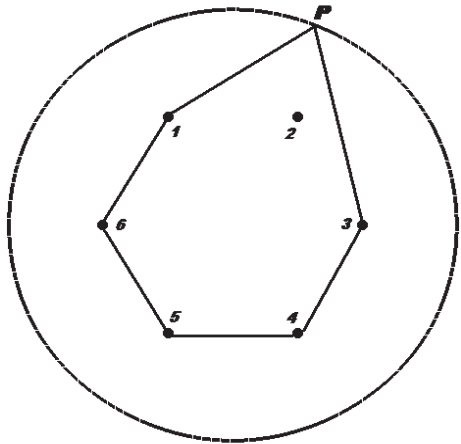}
\caption{Hexagonal string billiard}\label{lab1}
\end{figure}


\medskip

In closing this introduction let us just mention that the study of billiards and in
particular the subject of  billiards inside smooth convex curves  is a very  active 
area of research in the field of dynamical systems. The interested reader may consult
 some of the following books and  research articles dealing with
 various important aspects of this topic 
(see \cite{Korsch, Kozlov, Chernov, Tabachnikov}  and \cite{Amiran1988, GutkinKatok1995, GutkinKnill1996, Gutkin2003, Hubacher1987, Lazutkin1973, Mather1982} respectively).

\section{A family of billiards of class $C^{2}$}\label{results}

The billiard table obtained by the string construction
for an equilateral triangle, regardless of the string length used,  
is of class $C^{1}$ (see Hubacher  \cite{Hubacher1987}). 
In general, this is also true for all the other billiard tables obtained from regular polygons by the 
string construction, including the regular hexagon.
However, with a proper choice of the string length $l$ one can  
construct a table  having greater regularity.
To accomplish this  let us  first outline  the setup and also the strategy  behind the construction process for a  whole family of ``smooth'' string billiards. To start we select $K$ as  any  convex regular   $n$-gon ($n \geq 5$).
So right from the beginning we  exclude both the   equilateral triangle and the square from further considerations.
The reason for that is because we need to be able to  stellate $K$.
A stellated polygon can be derived from a regular polygon   by adding identical (congruent)
isosceles triangles to all its sides. 
Instead of adding them all we add only one of those triangles to our given polygon $K$.
The perimeter of this  resulting polygon is the sought for special value for
 the  string length $l$. Shortly we shall  find an explicit expression for it.
 A portion of a regular polygon having vertices   $F_{1},F_{2},\ldots,F_{n}$ with side length 2
 is shown in  left part of Fig.~\ref{lab2}.
 The coordinates of the vertices, in terms of the interior angle  $\alpha = \frac{n-2}{n}\pi$,
 are as indicated below:
 
\begin{align*}
F_{1}  &= [-1, 0] \\
F_{2}  &= [1, 0]\\
F_{3}  &= [1-2\cos \alpha, -2\sin \alpha]\\
F_{4}  &= [-1-2\cos \alpha+4 \cos^2\alpha, 2(-1+2\cos \alpha)\sin \alpha]\\
\ldots \\
F_{n-1}  &= [1+2\cos \alpha-4 \cos^2\alpha, 2(-1+2\cos \alpha)\sin \alpha] \\
F_{n}  &= [-1+2\cos \alpha, -2\sin \alpha]
\end{align*}

Now let us add the  isosceles triangle with top vertex $G_{1}$ to the side $F_{1},F_{2}$ (see
left part of Fig.~\ref{lab2}).
 In this  figure  we have actually  added several  identical (congruent) isosceles triangles to the  sides
 of the regular   $n$-gon. They have vertices $G_{1},G_{2}, \ldots, G_{n}$, whose
 coordinates   are given  below:

\begin{align*}
G_{1}  &=  \left[0, - \frac{\sin \alpha}{\cos \alpha} \right]\\
G_{2} &= \left[\frac{\cos \alpha -1}{\cos \alpha}, 0\right] \\
\ldots \\
G_{n} &= \left[-\frac{\cos \alpha -1}{\cos \alpha}, 0\right]
\end{align*}

Each of the equal sides of the  isosceles triangle  $F_{1}F_{2}G_{1}$  has length $ d = \left|F_{1} G_{1} \right| =   \left|F_{2} G_{1} \right| =   -1/\cos \alpha.$ 

 So finally we get the expression for the particular string length  we were looking for:
  $$l = 2(n-1) + 2 d = 2 (n-1) - \frac{2}{\cos(\frac{n-2}{n} \pi)}$$

\begin{figure}[htp]
\centering
\includegraphics[height=2.8in,width=6.0in]{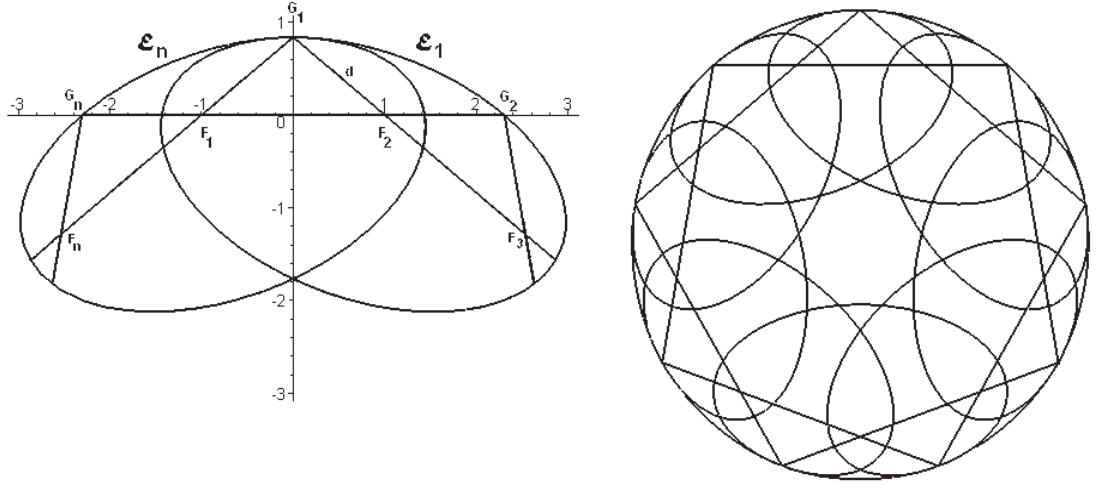}
\caption{Construction of  $n$-gonal string billiard  and
enneagonal string billiard }\label{lab2}
\end{figure}


So much for the setup. In the right part of Fig.~\ref{lab2} we illustrate the result of
the construction process described above for the case when  $K$ is  a   regular nine-gon.
That we get billiard tables with greater regularity than just  $C^{1}$ is the contents of 
the following theorem.

\begin{thm}\label{gardener}
The generalized gardener's construction with length of the string   $$l = 2 (n-1) - \frac{2}{\cos(\frac{n-2}{n} \pi)}$$
provides  us with  a  family
of billiard tables (one for each convex regular   $n$-gon, $n \geq 5$) which are globally $C^{2}.$
\end{thm}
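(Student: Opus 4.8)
The plan is to use the piecewise‑elliptical structure of $\partial\Omega$ supplied by the string construction, to locate its finitely many junction points exactly, and then to read off $C^{2}$ regularity from a reflection symmetry that is present precisely because of the special value of $l$. First I would recall the standard anatomy of the gardener's construction: when the loop is pulled taut at a point $P$, the taut string is two straight segments from $P$ to two vertices of $K$ — the \emph{active} vertices — together with the polygonal arc of $\partial K$ joining them the long way round; so $P$ traces an arc of an ellipse whose foci are the two active vertices and with $2a$ equal to $l$ minus the length of that fixed polygonal arc. The active pair changes exactly when $P$ crosses the line carrying an edge of $K$, on the far side of one of its endpoints: there the straight part of the string lies flush along the edge and an active vertex is handed off to its neighbour. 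Hence $\partial\Omega$ is a finite union of elliptical arcs, real‑analytic on each open arc, and the whole question reduces to the junction points. (Note that, although the length $l$ was singled out via the one added triangle, the construction itself uses only the loop and $K$; in particular $\partial\Omega$ inherits the full dihedral symmetry of the regular $n$‑gon.)

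Next I would pin the junctions down using the special value $l = 2(n-1) - 2/\cos\alpha$. From the explicit vertex coordinates and $|F_iG_i|=|F_{i+1}G_i|=d=-1/\cos\alpha$ one checks two things: (i) the apex $G_j$ of the isosceles triangle on the side $F_jF_{j+1}$ lies simultaneously on the two edges $F_{j-1}F_j$ and $F_{j+1}F_{j+2}$ extended — this is the stellation property of a regular $n$‑gon, and is exactly where $n\ge 5$ is used; and (ii) when $P=G_j$ the string of length $l$ is taut with active vertices $F_j,F_{j+1}$ and all of $\partial K$ except the side $F_jF_{j+1}$ wrapped. By (i) and (ii) each $G_j$ is a (double) hand‑off point, and a short count — as $P$ goes once around $\partial\Omega$ each active vertex runs once through the $n$ vertices of $K$, giving $2n$ hand‑offs, two realized at each $G_j$ — shows these are the only ones. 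Thus $\partial\Omega$ is a union of exactly $n$ elliptical arcs joined at $G_1,\dots,G_n$, the arc from $G_j$ to $G_{j+1}$ lying on the ellipse with foci $F_j$ and $F_{j+2}$ (indices mod $n$). That $\partial\Omega$ is $C^{1}$ at each $G_j$ is already known (it holds for every string billiard; see \cite{Hubacher1987}), but it is also immediate here: $\partial\Omega$ is convex and invariant under the reflection $\sigma_j$ in the perpendicular bisector of $F_jF_{j+1}$, which passes through $G_j$; since that line meets the interior of $\partial\Omega$ it cannot be tangent there, so the tangent to $\partial\Omega$ at $G_j$ is perpendicular to $\sigma_j$.

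For the $C^{2}$ conclusion I would observe that $\sigma_j$ sends the focus pair $\{F_{j-1},F_{j+1}\}$ to $\{F_j,F_{j+2}\}$, hence carries the arc reaching $G_j$ from one side onto the arc leaving from the other. Take Cartesian coordinates centred at $G_j$ with the $x$‑axis along the common tangent, so that $\sigma_j$ is $(x,y)\mapsto(-x,y)$. Near $G_j$ the curve is the graph of a function $g$ with $g(0)=g'(0)=0$, even by $\sigma_j$‑invariance, equal to a real‑analytic function $f$ (a branch of one ellipse) for $x\le 0$ and to $f(-x)$ for $x\ge 0$. A one‑line computation of one‑sided second derivatives gives $g''(0^-)=g''(0^+)=f''(0)$, so $g\in C^{2}$ near $G_j$; together with analyticity on the open arcs this yields global $C^{2}$ regularity. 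The same computation gives $g'''(0^-)=-g'''(0^+)$, which is why these tables are in general only $C^{2}$ and not $C^{3}$.

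The step I expect to be the real obstacle is the second one: showing that for this string length — and essentially only this one — the hand‑off points are the symmetry points $G_j$. For a generic $l$ the $2n$ hand‑offs fall at $2n$ points lying off the symmetry axes, the reflection trick is unavailable, and one gets no better than $C^{1}$ (the equilateral triangle and the square behave this way); it is the stellation identity (i), forced by the formula for $l$, that makes the two arcs meeting at each junction mirror images of one another and thereby lifts the regularity to $C^{2}$.
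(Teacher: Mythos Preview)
Your proposal is correct and takes essentially the same route as the paper: both hinge on the observation that the reflection in the perpendicular bisector of $F_{j}F_{j+1}$ interchanges the two elliptical arcs meeting at $G_{j}$ (the paper expresses this as $e_{n}(x,y)=e_{1}(-x,y)$ and then reads off that the function value and second derivative must agree at $x=0$), whence $C^{2}$ at the junction follows at once. You supply more justification for why the junctions sit exactly at the $G_{j}$ and why there are only $n$ arcs, points the paper leaves to the preliminary setup in Section~\ref{results}, but the core symmetry argument is the same.
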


For the proof see ~\ref{A1}.

 
 Some immediate consequences from this  construction process include  the following:
 \begin{itemize}
 \item  As opposed to the general case we need to concern ourselves with only one  kind of ellipse.
 \item  The   closed boundary curve we get is a smooth union of $n$ elliptical arcs. 
  In the general case one gets twice that  number.
 \item  Since an ellipse will always have strictly positive curvature
   the   curvature $\kappa$ is a continuous, strictly positive function on $\partial \Omega$.
 \end{itemize}   


From this point  on we shall discuss one specific billiard only:
the hexagonal string billiard.
We shall now  provide some specific details for it (see Fig.~\ref{lab3}).
Let $K$ be the regular hexagon with sidelength 2.
Its vertices  have the coordinates 
  
$$\begin{array}{ccccccccc}
  F_{1}&=&(-1,\sqrt{3}),& F_{2}&=&(1,\sqrt{3}),& F_{3}&=&(2,0) \\
  F_{4}&=&(1,-\sqrt{3}),& F_{5}&=&(-1,-\sqrt{3}),& F_{6}&=&(-2,0)
\end{array}$$
Since  $n=6$,  $\alpha=\frac{2 \pi}{3}$ and $d = -1/\cos \alpha = 2$ 
 the  length of the string  $l = 2 (n-1) + 2 d = 14$.
Each of the six elliptical arcs uses a unique  pair of foci $F_{i},F_{j}$,
 where $i$ and $j$ are either  both even or both odd. 
To reflect this in our notation,  we have used arc $\stackrel{\frown}{ij}$
 for an arc focused at  $F_{i},F_{j}$.
 
 The equation for  the arc $\stackrel{\frown}{24}$ is 
$$ \frac{(x-1)^{2}}{6} + \frac{y^{2}}{9} = 1$$
   whereas the equation for the arc $\stackrel{\frown}{13}$
 is given by
 $$11y^{2}+2\sqrt{3}(x-6)y+9x^{2}-12x=60$$
 At the common point $(3,\sqrt{3})$
 the first derivative for  both yields
$-\sqrt{3}$ while  the second derivative is $-\frac{3}{2}\sqrt{3}.$

As for the curvature $\kappa$ of any elliptical arc, a simple calculation gives:
 $$\frac{\sqrt{6}}{9} \leq \kappa \leq \frac{3\sqrt{3}}{16}$$

 \begin{figure}[htp]
\centering
\includegraphics[height=2.2in,width=3.2in]{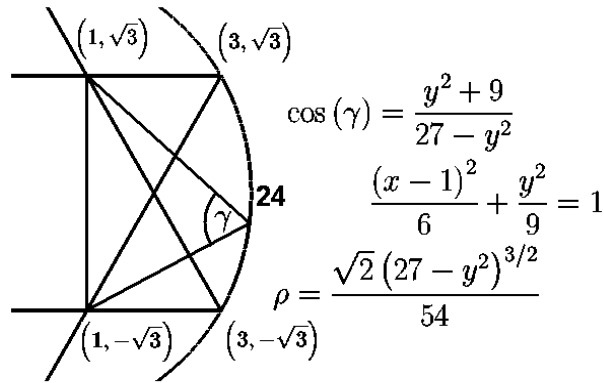}
\caption{Data of hexagonal  billiard}\label{lab3}
\end{figure}

\section{Types of orbits}\label{orbits}

The orbits of billiards are sequences of line segments on the
billiard table  with the segments corresponding to the straight
paths which the ball must follow  within the region until it hits
the boundary where it reflects elastically. Each such orbit is
determined by the choice of the  initial location  for the  billiard
ball as well as the  direction for the shot.

It is well-known that in  a generic or typical billiard one can
identify different types of trajectories: 
periodic, quasi-periodic (librational), 
 whispering-gallery (rotational) and chaotic.
We shall try to describe and illustrate  them
in the context of  the hexagonal string billiard.

\subsection{Periodic orbits}
In almost any study on billiards initial interest centers on
periodic orbits: those in which the particle follows the same path
endlessly.
In general these paths can be recognized easily because they simply
correspond to inscribed simple polygons or  star polygons.
When, for example, a convex billiard table has two perpendicular
symmetry axes then we immediately get  a period four orbit (Hasselblatt and Katok~\cite{Hasselblatt}).
Our billiard has  six axes of symmetry, so we get twelve points from their intersection
 with the boundary of the billiard table.
 Selecting arbitrarily any of these  points
 as the initial point   and then connecting
   it to the next one by skipping over exactly $k - 1 = 0,\ldots,5$
   intermediate points  we are bound to get a closed figure.
   Since this figure is always invariant under the symmetries,
 it follows that at each point of intersection with the table the angle of incidence
is equal to the angle of reflection.
Now, according as $k = 1,\ldots ,6$ the corresponding  periodic
orbits will close onto themselves after $\frac{12}{(12,k)}$ bounces, respectively, where
 $(n,k)$ denotes the greatest common divisor of $n$ and $k.$
 Some of these basic periodic orbits for our billiard are shown in 
  Fig.~\ref{lab4}.



\begin{figure}[htp]
  \begin{center}
    \subfigure[Unstable orbits of period 2, 3, and 6]{\label{lab4_a}\includegraphics[scale=0.34]{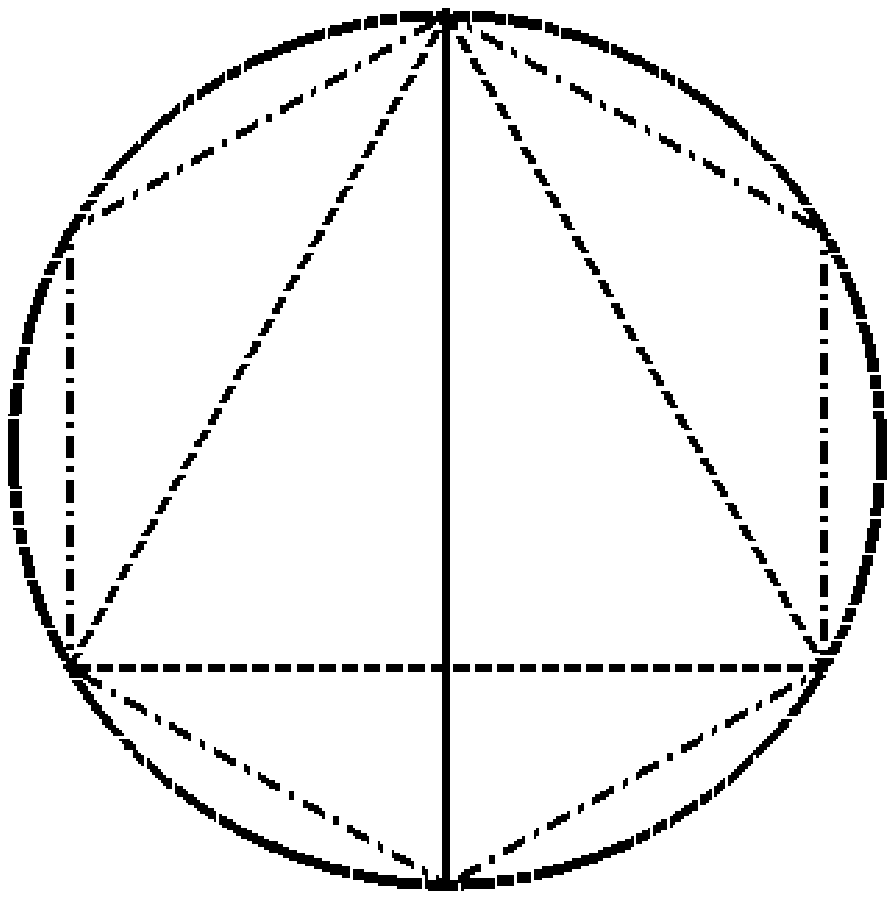}} 
    \subfigure[Stable orbits of period 2, 3, and 6]{\label{lab4_b}\includegraphics[scale=0.34]{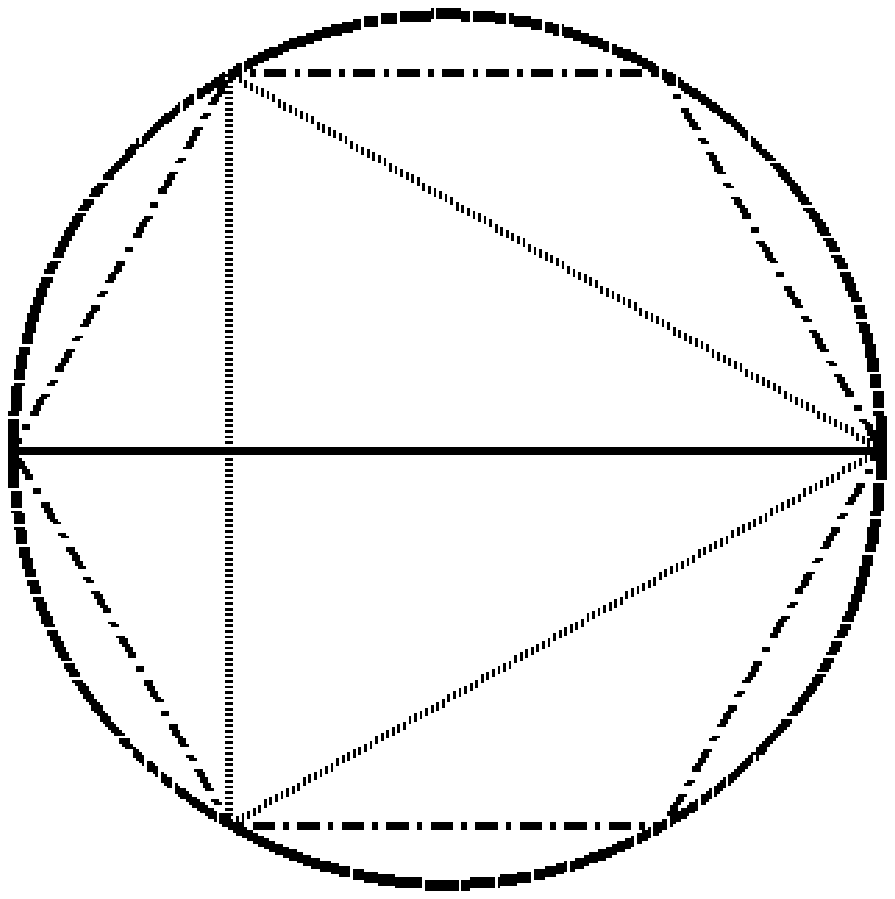}}
    \subfigure[Unstable orbits of period 4, 12, and 12] {\label{lab4_c}
    \includegraphics[scale=0.34]{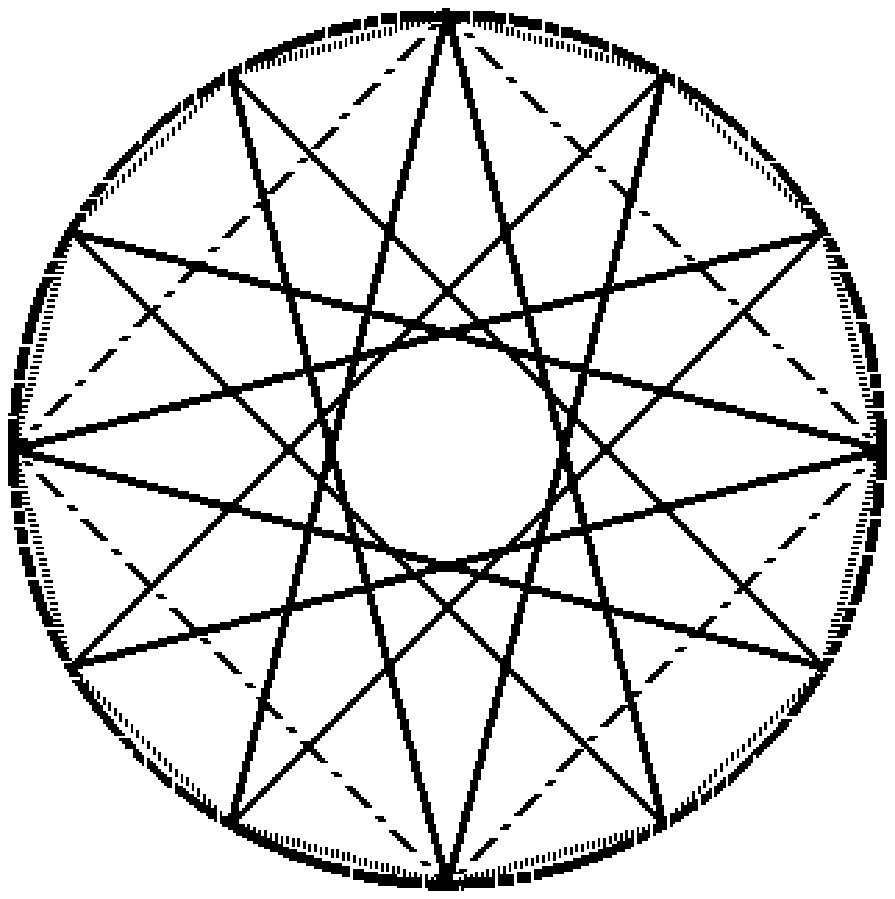}}
  \end{center}
  \caption{Periodic orbits ${12 \brace k}$ where $k=1\ldots6$ }
  \label{lab4}
\end{figure}


  
When dealing with periodic trajectories a very natural question is: how
many are there performing $n$  bounces and at the same time going around the boundary $k$ times?
Birkhoff~\cite{Birkhoff}   provided an astonishing answer to the
previous question as early as 1927 (see also
Tabachnikov \cite{Tabachnikov}):

 \begin{thm}
 For any $n \geq 2$ and every $k < \frac{n}{2}$, relatively prime,
 there exist two geometrically distinct $n-$periodic trajectories
 with the rotation number $k$.
 \end{thm}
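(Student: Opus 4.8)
The plan is to follow Birkhoff's own, purely variational approach: realize the $n$-periodic trajectories of rotation number $k$ as critical points of the perimeter functional on a space of inscribed polygons, and then show that this functional has at least two critical orbits by combining a maximization with a minimax. The only features of $\partial\Omega$ that enter are that it is a closed convex curve with $\kappa>0$ everywhere, i.e. that it is \emph{strictly} convex; both are recorded in Section~\ref{results}, and the particular string length $l$ plays no role.

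First I would fix a parametrization $x:\mathbf{R}/\mathbf{Z}\to\partial\Omega$ and, using $\gcd(n,k)=1$, attach to each ordered $n$-tuple $s_{1}<s_{2}<\cdots<s_{n}<s_{1}+1$ the closed star-polygon of type $\{n/k\}$ with vertices $x(s_{i})$ and edges $x(s_{i})x(s_{i+k})$ (indices mod $n$), together with its perimeter
$$
L(s_{1},\dots,s_{n})=\sum_{i=1}^{n}\bigl|x(s_{i+k})-x(s_{i})\bigr|.
$$
The two summands of $L$ containing $s_{i}$ contribute the cosines of the angles the two chords at $x(s_{i})$ make with the tangent line there, so $\partial L/\partial s_{i}=0$ for every $i$ is exactly the reflection law at every vertex: the interior critical points of $L$ are precisely the $n$-periodic trajectories of rotation number $k$. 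The domain $X=\{s_{1}<\cdots<s_{n}<s_{1}+1\}$ is an open set homotopy equivalent to a circle (the position of the polygon gives one $S^{1}$, the $n$ positive gaps summing to $1$ give an open $(n-1)$-simplex); it compactifies to $\bar{X}\cong S^{1}\times\Delta^{n-1}$ by letting consecutive vertices collide, $L$ extends continuously to $\bar{X}$, and the cyclic group $\mathbf{Z}_{n}$ acts freely on $X$ by relabelling with $L$ invariant. Two labelled configurations represent the same geometric trajectory precisely when they lie on one $\mathbf{Z}_{n}$-orbit, and the hypothesis $k<n/2$ discards the duplication between rotation numbers $k$ and $n-k$; so it suffices to produce two critical $\mathbf{Z}_{n}$-orbits in $X$.

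The first trajectory is the maximizer of $L$: it attains its maximum $\Lambda$ on the compact space $\bar{X}$, and strict convexity is used to show the maximum is not realized on the collision set — separating a collided pair of vertices and re-optimizing strictly increases the perimeter — so the maximizer lies in $X$ and is a genuine nondegenerate $n$-gon, Birkhoff's maximal trajectory. To obtain a second, geometrically distinct one I would run a minimax over a compact base: for $p\in\partial\Omega$ let $\ell(p)$ be the largest perimeter of a $\{n/k\}$-polygon one of whose vertices is $p$. Then $\ell$ is continuous on the circle $\partial\Omega$, $\max_{p}\ell(p)=\Lambda$ is attained at a vertex of the maximal trajectory, and at a point $p_{0}$ where $\ell$ attains its minimum the first-order (envelope) condition forces the reflection law at $p_{0}$ itself, so the optimal polygon through $p_{0}$ is again a billiard trajectory, of perimeter $\ell(p_{0})\le\Lambda$. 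If $\ell(p_{0})<\Lambda$ it is geometrically distinct from the maximal one; if $\ell\equiv\Lambda$ then every point of $\partial\Omega$ is a vertex of a maximal trajectory, giving a whole circle of them. Either way at least two distinct $n$-periodic trajectories of rotation number $k$ are produced.

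The step I expect to be the real obstacle is making these two optimizations legitimate when $k\ge 2$. For rotation number $1$ everything above is elementary (inscribed convex polygons), but for $k\ge 2$ the star-polygon functional is genuinely more delicate: one must rule out that the maxima involved are attained only through degenerating configurations, and show that $\ell$ is regular enough at its minimum for the envelope argument to apply. This is exactly the difficulty Birkhoff~\cite{Birkhoff} had to overcome, and it can be bypassed via the Poincar\'{e}--Birkhoff fixed-point theorem: the billiard map on the open annulus $\partial\Omega\times(0,\pi)$ is an area-preserving monotone twist map whose footpoint rotation number runs from $0$ to $1$ as the reflection angle sweeps $(0,\pi)$, so composing its $n$-th iterate with the rigid rotation by $-k/n$ (legitimate since $0<k/n<1/2$) gives an area-preserving homeomorphism of a closed annulus that rotates the two boundary circles in opposite senses, hence has at least two fixed points; as $\gcd(n,k)=1$ these have least period exactly $n$ and give two geometrically distinct trajectories of rotation number $k$. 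Either route proves the theorem; the variational one is the argument Birkhoff gave, and the one I would write out in full.
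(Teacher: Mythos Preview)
The paper does not prove this theorem at all: it is quoted in Section~\ref{orbits} as a classical result of Birkhoff~\cite{Birkhoff} (with a pointer to Tabachnikov~\cite{Tabachnikov}) and is used only to count periodic orbits, so there is nothing in the paper to compare your argument against.

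Your outline is a faithful sketch of the standard variational proof, and you correctly flag the genuine difficulty --- ruling out degenerate maxima when $k\ge 2$ --- and the Poincar\'e--Birkhoff twist-map alternative. If you want to turn this into a self-contained proof rather than a citation, the minimax step as you wrote it (fix one vertex $p$, maximize the remaining $n-1$, then minimize over $p$) needs more care: you must argue that the constrained maximizer depends regularly enough on $p$ near the minimum of $\ell$ for the envelope/first-order condition at $p_0$ to yield the reflection law there, and this regularity is not automatic when several maximizers coexist. Birkhoff's original minimax in \cite{Birkhoff} is set up slightly differently to sidestep exactly this issue; alternatively, the twist-map route you mention avoids it entirely and is the cleaner choice given that the paper already has $\kappa>0$ on $\partial\Omega$.
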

 
 So, for example,
 the 2-,3- and 6-bounce orbits  in Fig.~\ref{lab4}(a) and those in Fig.~\ref{lab4}(b)
 are geometrically distinct. They are similar though, with a similarity ratio 
 of $$\frac{2 \sqrt{3}(-1+\sqrt{6})}{5} = 1.0042$$
 
 From Birkhoff's  theorem we are  lead  to the conclusion that there are always at least
 $\varphi(n)$  distinct $n-$periodic trajectories, where $\varphi(n)$ denotes Euler's  totient function.
  Thus when $n=17$ we must have  $\varphi(17)= 16$  distinct $17-$periodic trajectories.
 The reader can appreciate  the eight unstable 17-sided  star polygons that we  
  constructed for the hexagonal string billiard in Fig.~\ref{lab5}.

 \begin{figure}[htp]
\centering
\includegraphics[height=3.0in,width=6.0in]{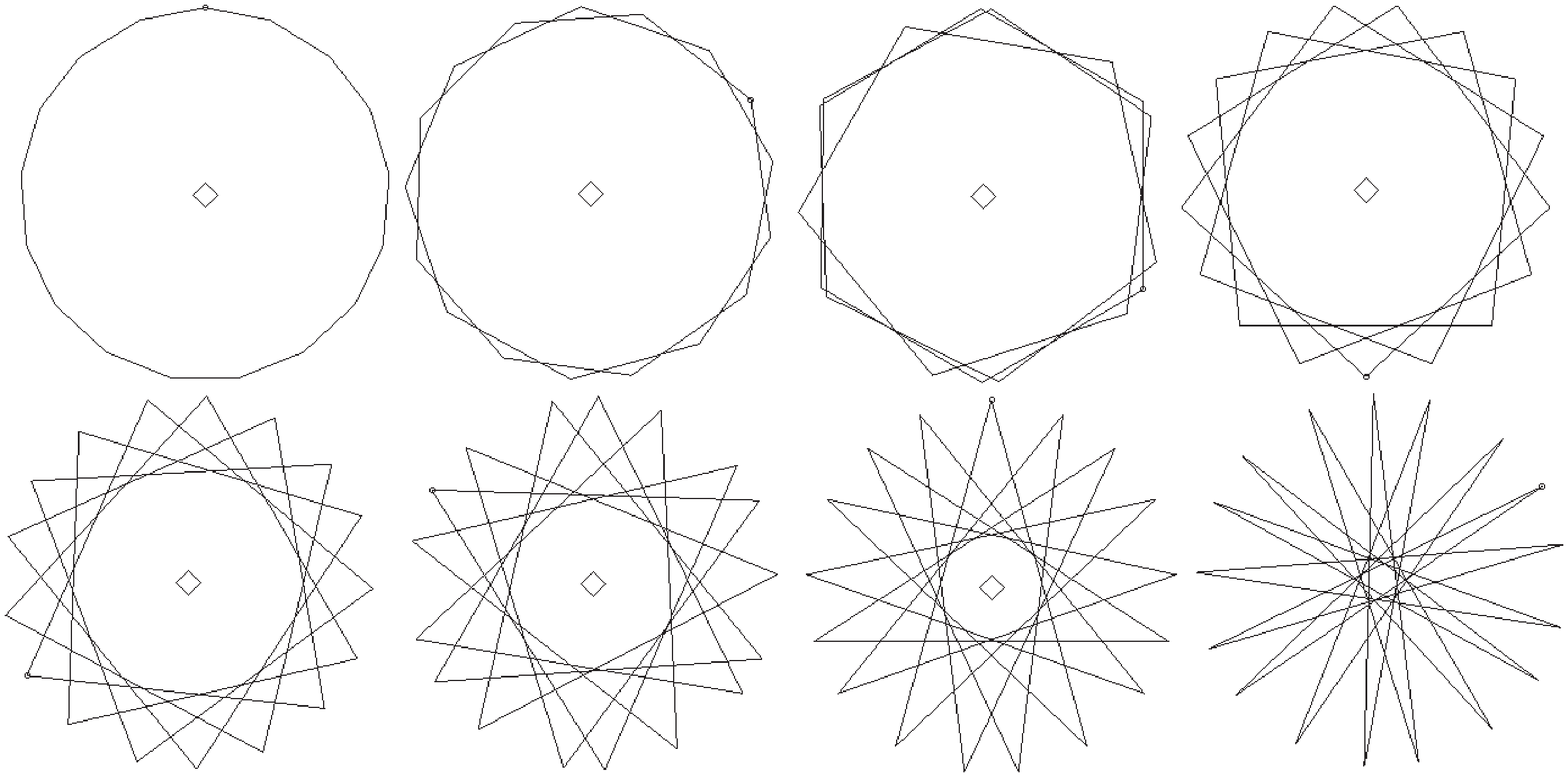}
\caption{Star polygons: ${17 \brace 1} \cdots {17 \brace8}$}\label{lab5}
\end{figure}

 It is clear that, because our billiard  has six-fold symmetry,  we can rotate 
  these  geometrically distinct $n-$periodic trajectories by certain angles.
 The  order of rotational symmetry is given by $\frac{6}{(6,n)}$ and 
 so this yields a grand total of
      $$\frac{6}{(6,n)}\varphi(n) $$
   $n-$periodic trajectories  for each $ n \geq 3.$

The first few values of this expression  are given in the following 
  Table~\ref{tab:mytable}.


  \begin{table}[htp]
  \begin{center}
    \begin{tabular}{|c|c||c|c|}
    \hline
    $n$ & $\frac{6}{(6,n)}\varphi(n)$ & $n$ & $\frac{6}{(6,n)}\varphi(n)$\\
    \hline
    3 & 4 & 8 &  12 \\
    4 & 6 & 9 & 12 \\
    5 & 24 & 10 &  12\\
    6 & 2 & 11 &  60\\
    7 & 36 &  12 & 4 \\
   \hline
   \end{tabular}
  \caption{Total number of $n-$gons.}
  \label{tab:mytable}
  \end{center}
  \end{table}


\subsection{Quasi-periodic orbits}
Sometimes one also  encounters
certain orbits which explore only  restricted parts (segments)  of the boundary $\partial \Omega$.
Every (stable) simple polygon or  star polygon (the two-bounce orbit included) gives rise to
 an infinite number of these quasi-periodic orbits. They resemble the underlying polygons: it is
 as if these just fattened  up. So that's why we will also refer to them
simply as  fat (star) polygons.
Orbits which explore $2, 3, 4, 6$ and $9$ parts can be seen  in Figs.~\ref{lab6} and  ~\ref{lab7}.
More examples can be found in  Figs.~\ref{lab11},  ~\ref{lab15}, ~\ref{lab23} and  ~\ref{lab25}.

\begin{figure}[htp]
\centering
\includegraphics[height=2.0in,width=2.0in]{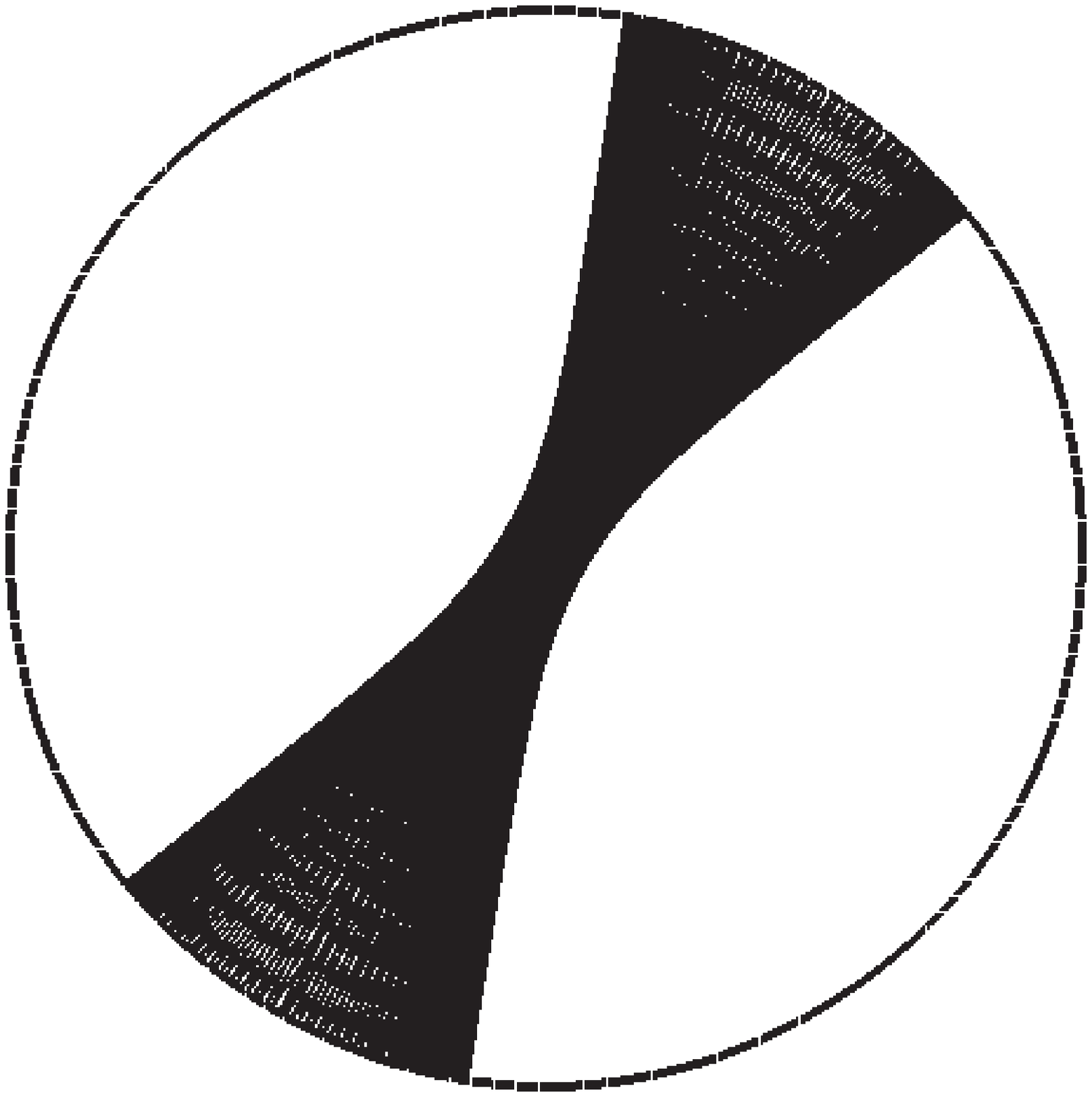}
\caption{A typical bouncing orbit or fat polygon of type ${2 \brace 1}$}\label{lab6}
\end{figure}

\begin{figure}[htp]
\centering
\includegraphics[height=4.0in,width=4.0in]{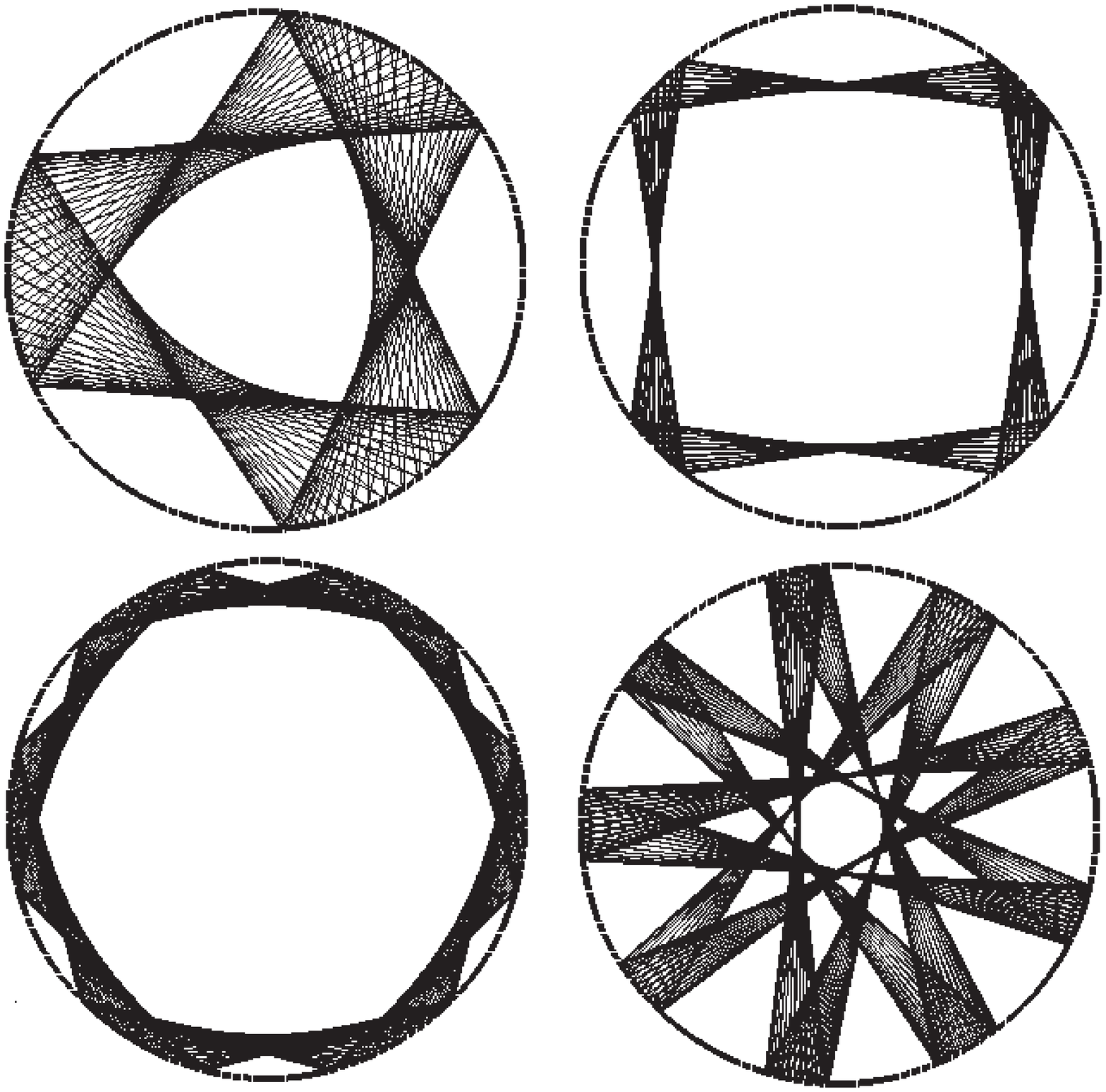}
\caption{Quasi-periodic orbits or
         fat polygons of type: ${3 \brace 1}, {4 \brace 1}, {6 \brace 1}$ and ${9 \brace 4}$ }\label{lab7}
\end{figure}






\subsection{Whispering gallery orbit}
 Another kind of orbits, which on occasions are also
referred to as  \emph{skipping} trajectories (see Berry
\cite{Berry1981}, Chernov and Markarian\cite{Chernov}), are
very common for most billiards.
They are characterized by the
fact that they bounce all round $\partial \Omega$ densely filling a
ring-like shaped region.

There are many  of these orbits in  our billiard
and we show several of them in  Fig.~\ref{lab8}.

\begin{figure}[htp]
\centering
\includegraphics[height=2.0in,width=6.0in]{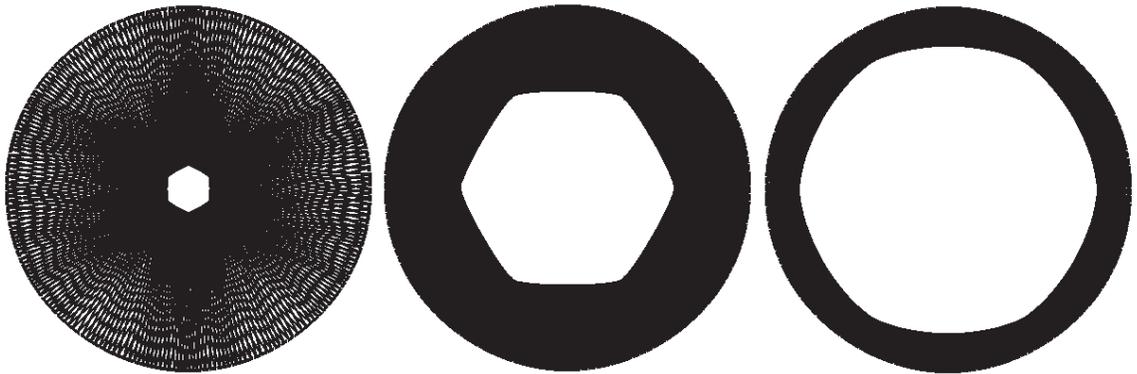}
\caption{Whispering gallery type orbits}\label{lab8}
\end{figure}


\subsection{Chaotic orbit}
Most  billiards exhibit chaotic behavior to some extent (circles and ellipses  excluded).
Chaotic trajectories look disordered, erratic and there is  no discernible pattern.
They usually manifest themselves more clearly in the surface of section,
which we will introduce later.
For the moment let us just say that there is no evidence that our billiard
has  these kinds of orbits.

\section{Some  analogous properties  of  the elliptical and the hexagonal string  billiard}

One of the most important properties of elliptic billiards is the following result which
essentially tells us that we can group the trajectories into three  families.

A trajectory for the billiard inside an ellipse either   
\begin{enumerate}
  \item   always passes through the two foci alternately (focal trajectory),  or
  \item   always intersects the open segment between the two foci (inner trajectory), or 
  \item   never intersects the closed segment between the foci (outer trajectory).
  \end{enumerate}

(See Siburg~\cite{Siburg}, Tabachnikov~\cite{Tabachnikov}, Chernov and Markarian~\cite{Chernov}).

In a completely analogous fashion for   the hexagonal string  billiard
we also have three  families of trajectories: focal, inner,  and outer.

\begin{thm}\label{ppal}
Consider a trajectory issuing from the point $P_{0}$ inside a
hexagonal string billiard  with vertices $P_{1},P_{2},\ldots.$

\begin{enumerate}
 \item[i)] If the segment $P_{0}P_{1}$ of the billiard trajectory
 is on a supporting line of $K$,
 then every segment $P_{i}P_{i+1}$ will be on a supporting line of
 $K$.
 
\item[ii)] If the segment $P_{0}P_{1}$ of a billiard trajectory does
 intersect the hexagon $K$ then
 all segments $P_{i}P_{i+1}$ will intersect it.

 \item[iii)]  If the segment $P_{0}P_{1}$ of a billiard trajectory does not
 intersect the hexagon $K$ then
 no segment $P_{i}P_{i+1}$ will intersect it.

 \end{enumerate}

\end{thm}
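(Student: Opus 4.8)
The plan is to peel the statement down to what happens at a single bounce and then quote the classical trichotomy for the ellipse. Extending the orbit backwards if necessary we may assume $P_0\in\partial\Omega$, so that every link $P_iP_{i+1}$ is a chord of the convex domain $\Omega$; since $K\subset\operatorname{int}\Omega$, a link meets $K$ precisely when the full line carrying it does. The three alternatives i)--iii) are exhaustive and mutually exclusive, because a line and the convex set $K$ are in exactly one of three positions: disjoint, or touching only along $\partial K$ (a supporting line — for a polygon, one containing an edge or passing through a single vertex), or crossing $\operatorname{int}K$. So it suffices to show that whichever position the line of $P_0P_1$ occupies relative to $K$, the line of $P_1P_2$ occupies the same one; the theorem then follows by induction on the number of reflections.

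The geometric heart of the matter is the following lemma, which merely unwinds the string construction: if $P\in\partial\Omega$ lies on the elliptical arc $\stackrel{\frown}{ij}$, then the two supporting lines of $K$ through $P$ are exactly $PF_i$ and $PF_j$. Indeed the taut loop of length $l=14$ realizing $P$ is the boundary of $\operatorname{conv}(K\cup\{P\})$: two segments from $P$ tangent to $K$, plus the complementary arc of $\partial K$; demanding that their total length be $l$ forces the two tangency points to be the vertex pair $F_i,F_j$ (the string hugs the four far edges of $K$, so $|PF_i|+|PF_j|=14-8=6$, in agreement with the equation recorded earlier for $\stackrel{\frown}{24}$). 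As $P$ is exterior to the convex set $K$ it has exactly two supporting lines, hence these are $PF_i$ and $PF_j$. Writing $C_P$ for the cone at $P$ spanned by the rays $PF_i,PF_j$ — the angular shadow of $K$, equivalently of the chord $[F_i,F_j]\subset K$, seen from $P$ — a line $m$ through $P$ therefore: misses $K$ iff $m$ misses $[F_i,F_j]$ (it lies outside $\overline{C_P}$); is a supporting line of $K$ iff $m\in\{PF_i,PF_j\}$, equivalently iff $m$ passes through $F_i$ or $F_j$; and crosses $\operatorname{int}K$ iff $m$ crosses the open segment $(F_i,F_j)$ (its relevant ray lies inside $C_P$, and, not being a supporting line, it must pierce $\operatorname{int}K$).

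Now fix the reflection point $P_1\in\stackrel{\frown}{ij}$, with foci $F_i,F_j$; the lines carrying $P_0P_1$ and $P_1P_2$ are mirror images in the tangent to this ellipse at $P_1$. I invoke the classical elliptical-billiard trichotomy — exactly the property of the ellipse recalled just before the theorem (Tabachnikov, Chernov--Markarian, Siburg) — in the form: two lines through a point of an ellipse that are symmetric in the tangent there are either both through a focus (then through opposite foci), or both cross the open segment between the foci, or both miss the closed segment between the foci. Coupling this with the lemma applied at $P_1$ to each of the two lines: if $P_0P_1$ misses $K$ it misses $[F_i,F_j]$, so $P_1P_2$ misses $[F_i,F_j]$, so $P_1P_2$ misses $K$; if $P_0P_1$ crosses $\operatorname{int}K$ it crosses $(F_i,F_j)$, so $P_1P_2$ crosses $(F_i,F_j)$, so $P_1P_2$ crosses $\operatorname{int}K$; and if $P_0P_1$ lies on a supporting line of $K$ it passes through $F_i$ or $F_j$, so $P_1P_2$ passes through $F_j$ or $F_i$ and is again a supporting line. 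This is exactly the inductive step, and the proof is complete.

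I expect the lemma to be the only real work: one must verify the combinatorics of the string construction around the regular hexagon — that, as $P$ runs around $\partial\Omega$, the pair of supporting vertices cycles through precisely the six short-diagonal pairs $\{F_{k-1},F_{k+1}\}$ carving out the six arcs, that each arc is nonempty, and that at the six arc junctions (where a supporting line through $P$ degenerates to one containing an edge, as at $(3,\sqrt3)$ with the edge $F_1F_2$) the three alternatives still match up. Here the explicit vertex coordinates and the global $C^2$-regularity furnished by Theorem~\ref{gardener} are what keep the picture consistent; the remaining ingredients are the exhaustiveness bookkeeping of the first paragraph and the cited trichotomy for the ellipse.
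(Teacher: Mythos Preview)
Your proof is correct and follows essentially the same route as the paper's. Both arguments hinge on the observation that for $P$ on the arc $\stackrel{\frown}{ij}$ the two supporting lines of $K$ through $P$ are precisely $PF_i$ and $PF_j$, so that the position of a chord relative to $K$ is equivalent to its position relative to the focal segment $[F_i,F_j]$; one then invokes the elliptical trichotomy at the reflection point. Your write-up packages this more uniformly---isolating the lemma once and treating all three cases simultaneously at $P_1$---whereas the paper argues cases i) and ii) separately (using the reverse trajectory in ii) to transfer from the foci at $P_0$ to those at $P_1$) and leaves iii) to the reader, but the substance is the same.
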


A sketch for the proof of this result can be found in ~\ref{B1}.

Another important result for the billiard inside an ellipse states 
that a billiard trajectory through the foci converges to the major axis of the ellipse.
 (See for example
Tabachnikov~\cite{Tabachnikov}, Batschelet~\cite{Batschelet1948}, Frantz~\cite{Frantz1994}, 
 Hasselblatt and Katok~\cite{Hasselblatt}, Wilker~\cite{Wilker1995}, Moser and Zehnder~\cite{Moser}).
    
 In practice this means that  any focal orbit   quickly becomes indistinguishable from a repeated
 tracing of the major axis of the ellipse.


Whereas a focal orbit in the case of the ellipse  converges  to the ``to and fro'' orbit
(the unstable two-bounce orbit), in that of  the hexagonal string  billiard it
 converges  to either one of the  equilateral triangles (the  unstable three-bounce orbits).
 
Before stating the result in more precise terms  we need an auxiliary result which provides 
bounds for the  sizes of the angles between incoming and outgoing rays  in focal orbits.
(See   Fig.~\ref{lab3}).

\begin{thm}\label{angle}
For a focal orbit of the hexagonal string billiard
the angle between incoming and outgoing segments  is
limited to the range
$$ 60^{\circ} \leq  \gamma \leq 70.53^{\circ}$$
\end{thm}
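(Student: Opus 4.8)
The plan is to reduce Theorem~\ref{angle} to an elementary computation on a single elliptical arc. By the sixfold symmetry of $\partial\Omega$ (the rotation through $60^{\circ}$ cyclically permutes the six arcs and carries foci to foci) it suffices to bound $\gamma$ when the bounce point $P$ lies on the arc $\stackrel{\frown}{24}$, i.e. on the ellipse $\frac{(x-1)^{2}}{6}+\frac{y^{2}}{9}=1$ with $a=3$, $b^{2}=6$, $c=\sqrt{3}$ and foci $F_{2}=(1,\sqrt{3})$, $F_{4}=(1,-\sqrt{3})$. This arc is exactly the piece of that ellipse running from the junction point $(3,\sqrt{3})$ (shared with $\stackrel{\frown}{13}$) to its mirror image $(3,-\sqrt{3})$ and passing through the rightmost point $(1+\sqrt{6},0)$.

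First I would use Theorem~\ref{ppal}(i) to describe a focal bounce. Every segment of the orbit lies on a supporting line of $K$, and for a point $P$ on $\stackrel{\frown}{24}$ the two supporting lines of $K$ through $P$ are precisely the lines $PF_{2}$ and $PF_{4}$ (at the two junction points these degenerate to lines carrying whole edges of $K$, but they still pass through the two foci). Since the normal to $\partial\Omega$ at $P$ bisects $\angle F_{2}PF_{4}$ — the optical property of the ellipse — the reflection law forces the incoming segment to lie along one of $PF_{2},PF_{4}$ and the outgoing segment along the other, the reflected segment pointing toward the second focus. Hence the two segments meeting at $P$ point toward $F_{2}$ and toward $F_{4}$, so $\gamma=\angle F_{2}PF_{4}$.

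Next, set $r_{1}=|PF_{2}|$, $r_{2}=|PF_{4}|$. The string construction gives $r_{1}+r_{2}=2a=6$, and the law of cosines in the triangle $F_{2}PF_{4}$ (with $|F_{2}F_{4}|=2c$), together with $a^{2}-c^{2}=b^{2}$, yields
$$\cos\gamma=\frac{2b^{2}}{r_{1}r_{2}}-1=\frac{12}{r_{1}r_{2}}-1 .$$
It remains to find the range of $r_{1}r_{2}=r_{1}(6-r_{1})$ along the actual arc. At the endpoints $(3,\pm\sqrt{3})$ one has $r_{1}=2$ and $r_{1}=4$, at $(1+\sqrt{6},0)$ one has $r_{1}=3$, and $r_{1}$ is monotone along the arc (a circle centred at $F_{2}$ meets the ellipse symmetrically about the line $x=1$, whereas the arc lies entirely in $\{x\ge 3\}$, so it is met at most once). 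Thus $r_{1}$ sweeps $[2,4]$, $r_{1}r_{2}$ sweeps $[8,9]$, and $\cos\gamma\in[\tfrac13,\tfrac12]$; that is, $60^{\circ}\le\gamma\le\arccos\tfrac13=70.53^{\circ}$.

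The step that needs the most care is the identification $\gamma=\angle F_{2}PF_{4}$: one must check that the only supporting lines of $K$ meeting a point of $\stackrel{\frown}{24}$ are $PF_{2}$ and $PF_{4}$ (with a separate glance at the junction points $(3,\pm\sqrt{3})$, where these lines also contain edges of $K$) and that the reflected segment genuinely points toward the other focus rather than away from it. Once the focal-orbit structure already established in the proof of Theorem~\ref{ppal} is invoked, everything else is routine plane geometry and arithmetic with the explicit data $a=3$, $b^{2}=6$.
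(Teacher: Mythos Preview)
Your proof is correct and follows essentially the same route as the paper: reduce by symmetry to the arc $\stackrel{\frown}{24}$, identify $\gamma$ with the focal angle $\angle F_{2}PF_{4}$, and apply the law of cosines in the triangle $F_{2}PF_{4}$ to bound $\cos\gamma$ between $\tfrac13$ and $\tfrac12$. The only difference is packaging: the paper substitutes the Cartesian equation of the ellipse to obtain $\cos\gamma=\dfrac{y^{2}+9}{27-y^{2}}$ and then lets $y$ range over $[-\sqrt{3},\sqrt{3}]$, whereas you use the focal--radius relation $r_{1}+r_{2}=6$ to get $\cos\gamma=\dfrac{12}{r_{1}r_{2}}-1$ and let $r_{1}$ range over $[2,4]$; these are equivalent computations (indeed $r_{1}=3-\tfrac{\sqrt{3}}{3}\,y$ on this ellipse).
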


Proof in ~\ref{C1}.


Without loss of generality let  us choose the  three-bounce orbit shown in Fig.~\ref{lab4}(a).
The  vertices for this triangle are
$S_{1}=(0,2\sqrt{3})$, $S_{3}=(3,-\sqrt{3})$, and 
$S_{5}=(-3,-\sqrt{3})$.
 Consider a billiard trajectory, passing through the focus  $F_{2}$,
whose initial point $P_{0}$ is on    arc $\stackrel{\frown}{24}$.
See  Fig.~\ref{lab10}. 
Observe that, for example,  point $P_{4}$ is closer to
point $S_{1}$ than $P_{1}$, and then $P_{7}$ is even closer to it
and so on. Already we find that $P_{7}P_{8}P_{9}$ resembles triangle
$S_{1}S_{5}S_{3}$ very much.

Under these circumstances we have:
\begin{thm}\label{triangle}
If a billiard trajectory travels along a supporting line of $K$, so that it
goes alternately   through the focal points
$F_{2},F_{6},F_{4}$  then its  trajectory
 will quickly become indistinguishable from a repeated
 tracing of the triangle $S_{1}S_{5}S_{3}$.
\end{thm}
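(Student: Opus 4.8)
The plan is to reduce the focal dynamics to the iteration of one fixed M\"obius transformation on an interval, whose attracting fixed point is the triangle $S_1S_5S_3$, and then to invoke the elementary classification of hyperbolic M\"obius maps; the first task is the combinatorics. By the focal reflection property of an ellipse, any trajectory segment through a focus $F$ of the arc it meets is reflected into a segment through the other focus of that arc; since the hypothesis together with Theorem~\ref{ppal}(i) keeps the whole trajectory on supporting lines of $K$, and the supporting lines of $K$ at a vertex $F_j$ fill exactly the $60^{\circ}$ cone between the two edges at $F_j$, I would check that the bounce points $P_1,P_2,P_3,\dots$ then fall cyclically on the three \emph{even} arcs $\stackrel{\frown}{26},\stackrel{\frown}{46},\stackrel{\frown}{24}$, the successive segments passing through $F_2,F_6,F_4$ in that cyclic order. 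These three arcs lie on congruent ellipses with $a=3$, $c=\sqrt3$, $e=1/\sqrt3$; each of them joins one vertex of $T:=S_1S_5S_3$ to one vertex of the companion stellation triangle $T'$ (whose three sides extend the edges $F_1F_2,F_3F_4,F_5F_6$), and both $T$ and $T'$ are themselves focal trajectories through $F_2,F_6,F_4$.

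Next I would linearize. For any of these ellipses, with $\mu:=\frac{1-e}{1+e}=2-\sqrt3$, a ray through one focus $F$ that reflects off the ellipse at $Q$ leaves along the ray through the other focus $F'$, and the angles $\theta$ at $F$ and $\theta'$ at $F'$ in the triangle $FQF'$ satisfy $\tan\frac{\theta}{2}\tan\frac{\theta'}{2}=\mu$; this follows from the law of sines in $FQF'$ together with $|FQ|+|QF'|=2a$. I would parametrise the $n$-th bounce by $a_n:=-\tan\frac{\phi_n}{2}$, where $\phi_n$ is the signed angle from the incoming segment to the major axis of the arc carrying $P_n$, that axis oriented from the incoming focus toward the outgoing focus. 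Since the outgoing focus at one bounce is the incoming focus at the next, and consecutive such axes differ by exactly $120^{\circ}$ (because $F_2F_6F_4$ is equilateral), the reflection identity and the tangent-addition formula combine into the single recursion
$$a_{n+1}=M(a_n),\qquad M(a)=\frac{\sqrt3\,a-\mu}{a+\sqrt3\,\mu},$$
the same M\"obius map at every bounce.

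Then I would read off the dynamics. The fixed points of $M$ are $a=1$ and $a=\mu$, with $M'(1)=\dfrac{4\mu}{(1+\sqrt3\,\mu)^{2}}=\dfrac12$ and $M'(\mu)=2$; the parameters $a_n$ of a trajectory of our type lie in $[\mu,1]$, with $a=1$ corresponding to $T=S_1S_5S_3$ (bounce points $S_1,S_5,S_3$) and $a=\mu$ to the triangle $T'$. Since $M$ is an increasing homeomorphism of $[\mu,1]$ fixing both endpoints, a M\"obius map has no further fixed points, $a=\mu$ is repelling and $a=1$ attracting, every $a_1\in(\mu,1]$ increases monotonically to $1$ with $|a_n-1|\le C\,2^{-n}$. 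Translating back, the bounce points converge to $S_1,S_5,S_3$ and the segments to the sides of $T$, geometrically with ratio $\frac18$ per lap of three bounces, so that after only a few bounces the trajectory is indistinguishable from a repeated tracing of $S_1S_5S_3$ --- exactly as for $P_7P_8P_9$ in Fig.~\ref{lab10}. (The lone trajectory of the stated type that does not converge to $T$ is $T'$ itself, i.e.\ the degenerate case with $P_0$ at an endpoint of the arc $\stackrel{\frown}{24}$, which the hypothesis implicitly excludes.)

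The step I expect to be the main obstacle is this linearization: getting the signs and the $120^{\circ}$ shift exactly right so that the recursion really is one fixed M\"obius map (this uses crucially that the three even arcs are congruent and that $F_2F_6F_4$ is equilateral), and confirming that the admissible range of $a$ is precisely the interval bounded by the two triangle orbits. Once that is in hand, both the geometric rate and --- more importantly --- global rather than merely local convergence are immediate, because a hyperbolic M\"obius transformation has exactly two fixed points and attracts every other point to one of them.
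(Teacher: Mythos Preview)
Your approach is correct and genuinely different from the paper's. The paper argues directly with the geometric data of Fig.~\ref{lab10}: it introduces the sequences $\{s_i\}$, $\{\varphi_i\}$, $\{\alpha_i\}$ (a side length and two angles in the triangle $F_jP_iF_{j+2}$), observes from Theorem~\ref{angle} that $\alpha_i\ge\pi/3$, and uses the angle relations $\alpha_i+\varphi_i+\varphi'_{i+1}=\pi$, $\varphi_{i+1}+\varphi'_{i+1}=2\pi/3$ together with the laws of sines and cosines to show that $\{\varphi_i\}$ and $\{s_i\}$ are monotone and bounded, hence convergent; the limits are then identified as $\alpha^\ast=\pi/3$, $\varphi^\ast=\pi/2$, $s^\ast=4$, which pins down the triangle $S_1S_5S_3$.

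Your route --- the half--angle identity $\tan\tfrac{\theta}{2}\tan\tfrac{\theta'}{2}=\mu=(1-e)/(1+e)$ for a focal chord, combined with the constant $120^{\circ}$ turn coming from the equilateral triangle $F_2F_6F_4$, collapsing the dynamics to a single M\"obius map with fixed points at the two triangle orbits $T$ and $T'$ --- is more conceptual and buys you more: you obtain the explicit linear rate $M'(1)=\tfrac12$ (hence $\tfrac18$ per lap), you see at once that convergence is global rather than merely local, and you identify the companion triangle $T'$ as the repelling fixed point, a feature the paper's argument does not surface. The paper's argument, on the other hand, is entirely elementary (no M\"obius classification, no half--angle identity) and stays close to the picture, at the cost of giving only qualitative convergence. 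Your own caveat is well placed: the only delicate point in your outline is the bookkeeping of signs and orientations in passing from $\theta_n'$ to $\theta_{n+1}$, which must be done carefully to land on a single map $M$ rather than an alternating pair; but that is a matter of care, not a gap.
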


See ~\ref{D1} for proof.

\begin{figure}[htp]
  \begin{center}
    \subfigure[Trajectory  approaching the triangle  $S_{1}S_{5}S_{3}$]{\label{lab10_a}\includegraphics[scale=0.30]{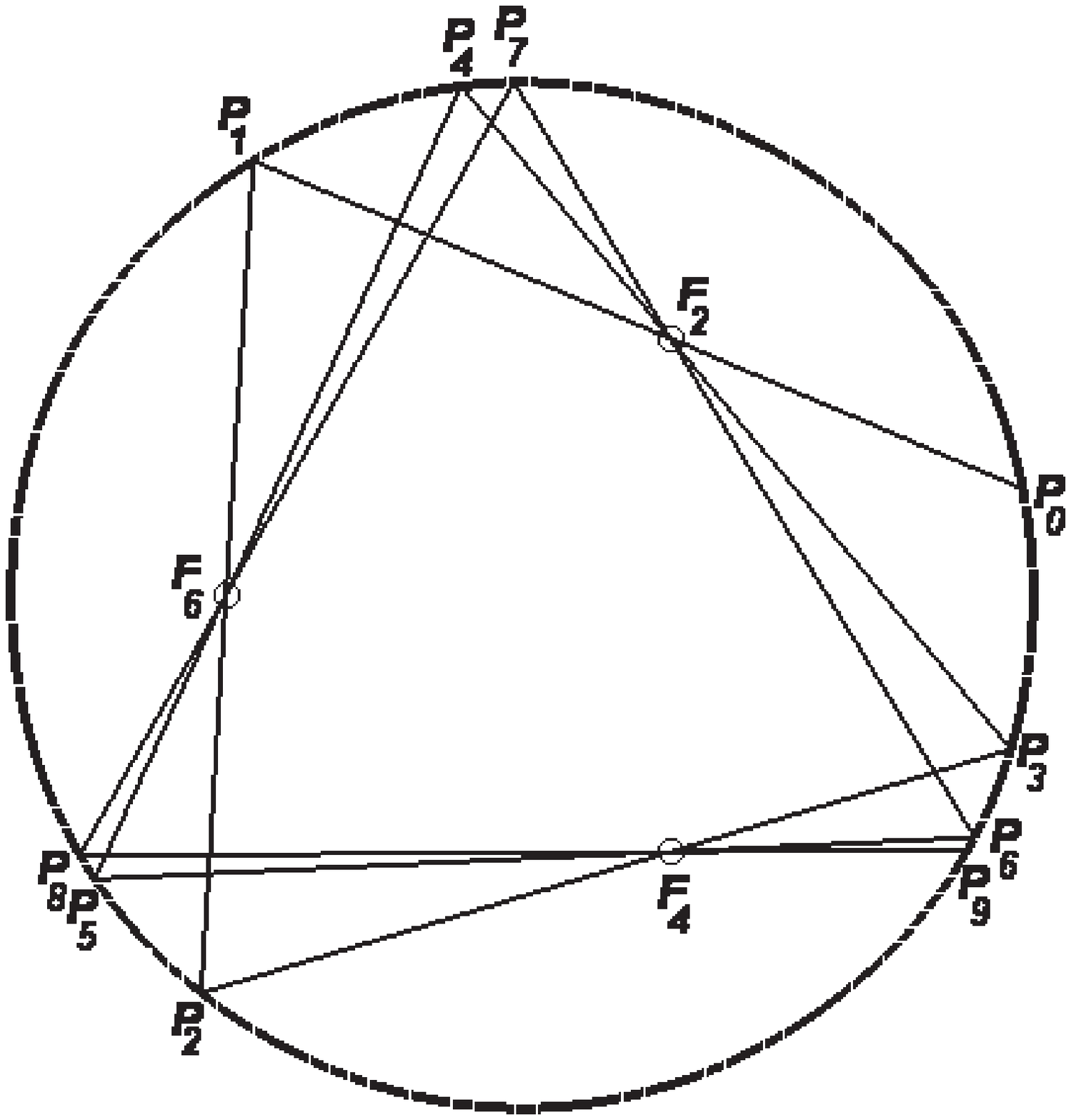}} 
   \subfigure[$\{s_{i}\},\{\varphi_{i}\},\{\alpha_{i}\}$]{\label{lab10_b}\includegraphics[scale=0.30]{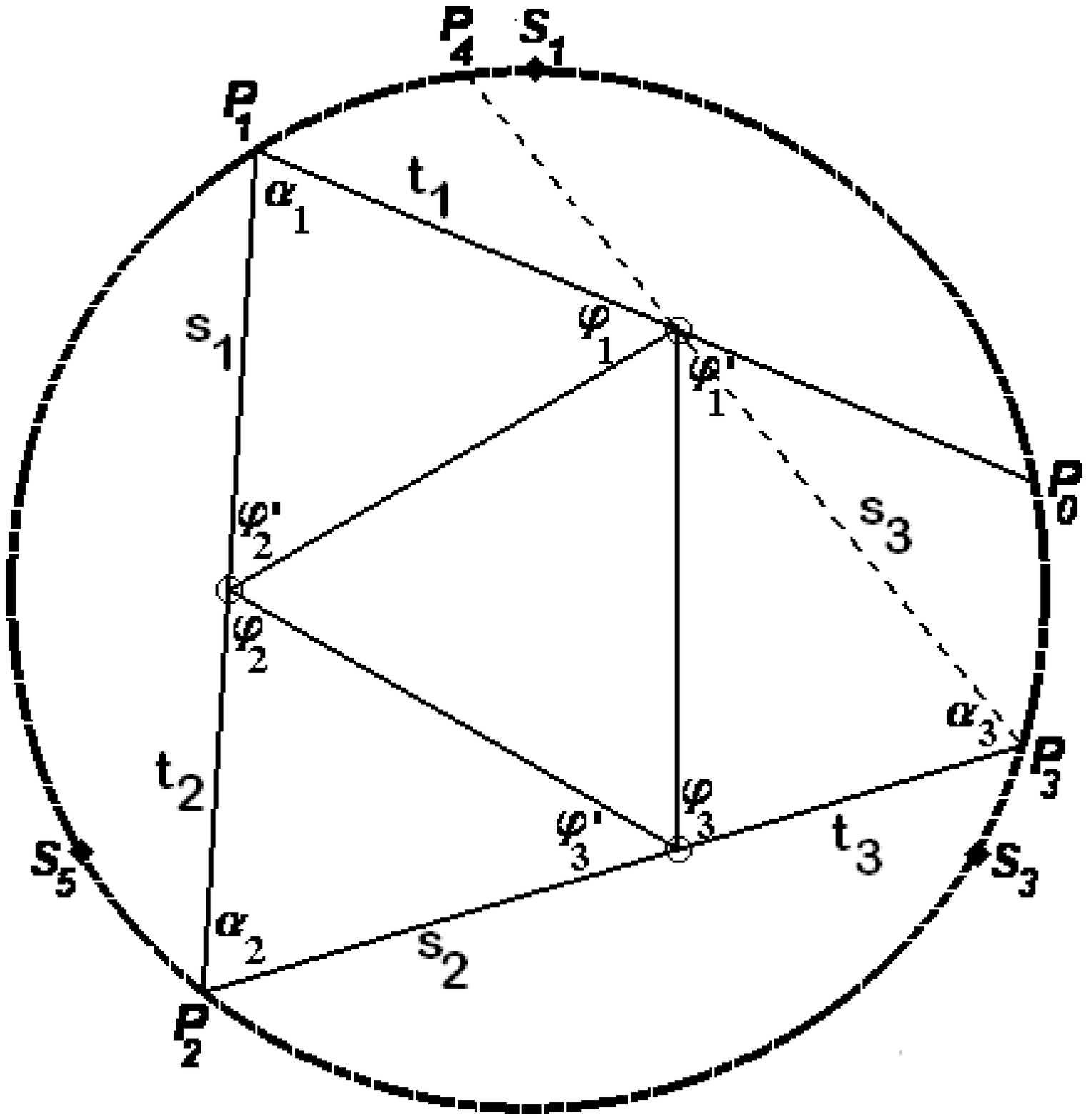}}
  \end{center}
  \caption{Focal orbits}
  \label{lab10}
\end{figure} 

 
\section{Stability}

Several authors include in
their work a brief discussion on the stability of certain
$n-$periodic orbits through the use of the so called deviation
matrix $M$.
 We also want to mention something about stability. For this purpose
we have chosen the  ${12 \brace 5}$ orbit, which can be  seen on the
top left part of Fig.~\ref{lab11}. This orbit is  not the same as
the earlier one we came across in Fig.~\ref{lab4}(c) (unstable ${12 \brace 5}$ orbit).

So following Berry \cite{Berry1981}, Lopac et al.
\cite{Lopac2006}, \cite{Lopac2001}, and Robnik
\cite{Robnik1983}   we have  to construct the deviation
matrix $M$ corresponding to the selected orbit:

 $$M = M_{1,2}M_{2,3}\cdots M_{12,1}$$
where each $2\times 2$ matrix $M_{i,k}$ can be expressed as

\[ M_{i,k} = \left(\begin{array}{cc}
-\frac{\sin \alpha_{i}}{\sin \alpha_{k}}+
\frac{\rho_{ik}}{R_{i}\sin \alpha_{k}} & -\frac{\rho_{ik}} {\sin \alpha_{i}\sin \alpha_{k}}  \\
-\frac{\rho_{ik}}{R_{i}R_{k}}+\frac{\sin \alpha_{k}}{R_{i}}
+\frac{\sin \alpha_{i}}{R_{k}} & -\frac{\sin \alpha_{k}}{\sin
\alpha_{i}}+ \frac{\rho_{ik}}{R_{k}\sin \alpha_{i}}
 \end{array}\right).\]

The symbol $R_{i}$ ($R_{k}$) denotes the radius of curvature at the
point of impact $P_{i}$ ($P_{k}$), $\alpha_{i}$ ($\alpha_{k}$) the
angle which the departing ray makes with the tangent on the billiard
boundary at the point of impact $P_{i}$ ($P_{k}$) and $\rho_{ik}$
the length of the path between two consecutive impact points $P_{i}$
and $P_{k}$.

For our particular choice we only get two kinds of these $2\times 2$
matrices $M_{i,k}$ since only the path lengths change between
impacts, all other data remaining constant.

In fact, if we define
$$\tau = \frac{3 \sqrt{\left(4947-2328 \sqrt{3}\right)}}{97}$$
then we get

\begin{enumerate}
 \item{}
$ R_{i} = R_{k} = \frac{ \sqrt{2}}{54} \left(\frac{2160+ 216
\sqrt{3}}{97}\right)^{3/2}$
 \item{}
$ \alpha_{i} = \alpha_{k} = 5 \pi/12,$ so that $ \sin (\alpha_{i}) =
\sin (\alpha_{k}) = \frac{\sqrt{2}(1+\sqrt{3})}{4} $
\item{}the two values which alternate

\begin{itemize}
\item[]  $\rho_{ik} =  \sqrt{\left(\frac{43}{3} \tau^2+8 \sqrt{3} \tau^2 +8
\tau+6 \sqrt{3} \tau+3 \right)}$
\item[] $\rho_{ik} = \frac{8}{3} \tau+\frac{4 \sqrt{3}}{3} \tau +2$
\end{itemize}

 \end{enumerate}
So
$$ M_{1,2} = M_{3,4} = M_{5,6} = M_{7,8} = M_{9,10} = M_{11,12} = T$$
$$ M_{2,3} = M_{4,5} = M_{6,7} = M_{8,9} = M_{10,11} = M_{12,1} = S$$
where
$$T = \left( \begin{array}{cc} 0.9830565623814575557 & -7.1795378524870580504  \\
             0.00467993843741796985  & 0.9830565623814575557
\end{array} \right)$$
and
$$S = \left( \begin{array}{cc} 0.9700724323780286883 & -7.1325295852446540944 \\
 0.00826627849706319570 & 0.9700724323780286883
 \end{array} \right)$$

Therefore our deviation matrix $M$ turns out to be

 $$M =  (T S)^{6} = \left( \begin{array}{cc} -.87488446006857883011 & -17.010433690097981202 \\
  .015415741802173798979 &  -.84327883276452294763
\end{array} \right)$$

Orbital stability depends on the eigenvalues of $M$.
 In fact, if by $\Tr(M)$ we denote the trace of $M$ then
 it is known that (Berry \cite{Berry1981}) we have the
 following possibilities:

 \begin{enumerate}
 \item{} if $|\Tr  (M)| < 2$ then the orbit is stable
 \item{} if $|\Tr  (M)| > 2$ then the orbit is unstable
 \item{} if $|\Tr  (M)| = 2$ then the orbit is neutrally stable.
\end{enumerate}

  For  our case we simply get that
$$ \Tr (M) = -1.7181632928331017777$$
which guarantees the stability of our chosen orbit.
 This analytically verified stability can also be observed
numerically: see Fig.~\ref{lab11}, where in addition to the
 ${12 \brace 5}$ polygon  we show several quasi-periodic orbits.
 
 \begin{figure}[htp]
\centering
\includegraphics[height=4.0in,width=4.0in]{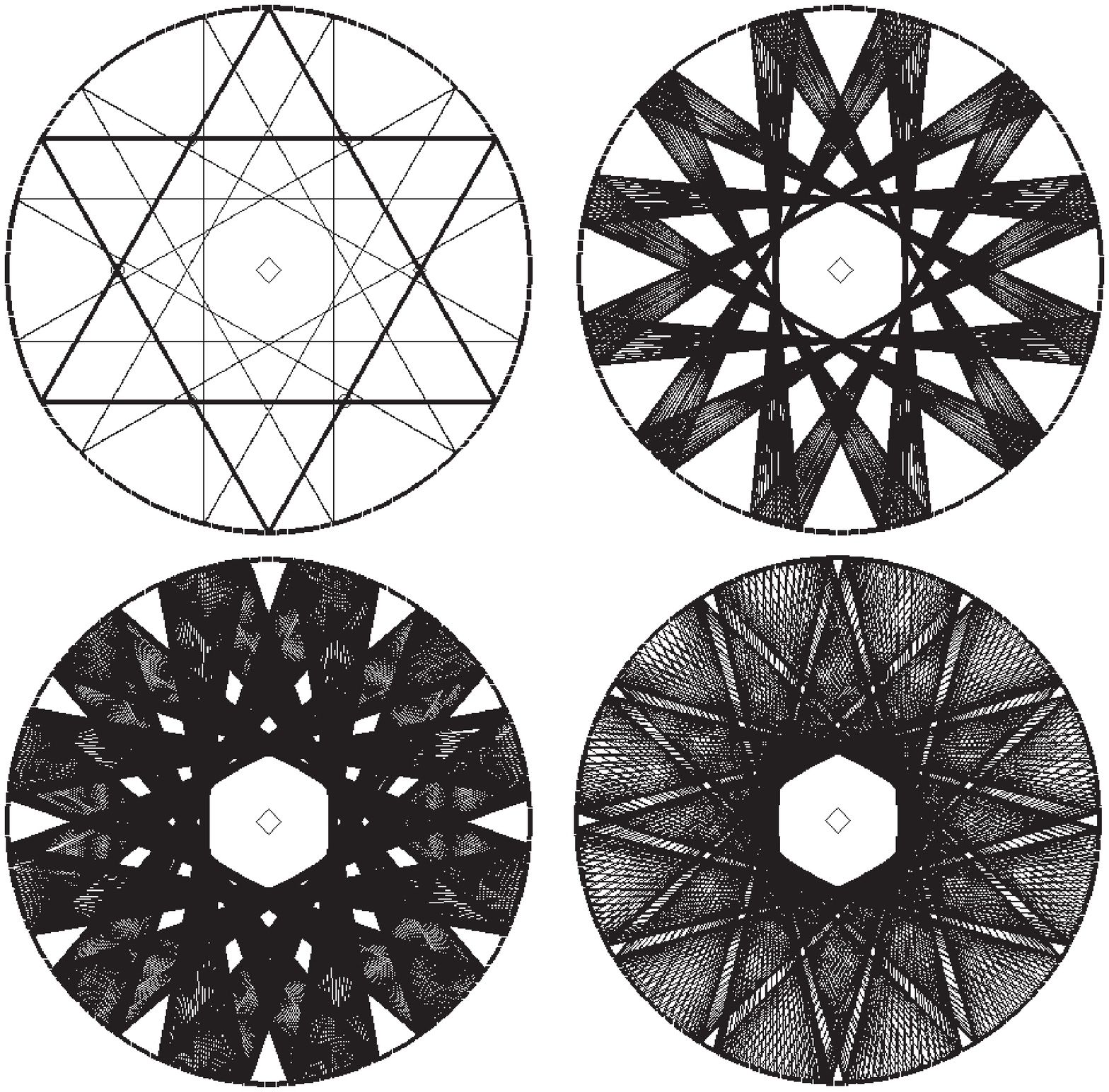}
\caption{Stable  ${12 \brace 5}$ orbit and several fat ${12 \brace 5}$'s}\label{lab11}
\end{figure}



\section{Forbidden inner region}\label{caustics}

In many of the figures (see for example Figs.~\ref{lab7},~\ref{lab8},~\ref{lab11}) we  
can  discern a  closed convex proper subset $\mathcal F$ of $\Omega$ 
which  lacks   segments of the corresponding billiard trajectory.  
 We shall call it a  \emph{``forbidden inner region''},  a term which we have
borrowed from Korsch and Zimmer \cite{Korsch2002}. It is separated from the allowed region
by  a non differentiable piecewise smooth closed curve.
By all of this we are, of course,  reminded of the notion of a caustic. There are several
different meanings of the word caustic  in billiards. The one that seems the most
appropriate  for us is (see Gruber~\cite{Gruber1990}, Klee and Wagon~\cite{Klee}, Turner~\cite{Turner1982}):

A convex caustic for a convex table $\Omega$  is a closed convex proper subset $\mathcal C$ of $\Omega$  
such that whenever the initial segment  of a billiard path in $\Omega$  lies in 
a supporting line of $\mathcal C$, then every later segment also lies in a supporting line of $\mathcal C$.
All billiard tables obtained by the string construction  have a special caustic. 
The hexagon $K$ is such  a caustic and we already established this in Theorem\ref{ppal}.

Are the  forbidden inner regions we get in the other cases  caustics as well?
For us to have a convex caustic every line segment of a trajectory   needs to touch  $C$.
However careful inspection  of the  billiard trajectories reveals that a few 
 line-segments  do not  touch $\mathcal C$ (see Fig.~\ref{lab12}).
 
  
 We  need to introduce an alternative description for the  forbidden inner region.
 We  define  the forbidden inner region $\mathcal F$ as   the intersection of all the ``left'' half-planes to the  
 lines containing the segments of  a billiard trajectory (see Hasselblatt and Katok~\cite{Hasselblatt}). 

We have included Fig.~\ref{lab12}  which allows us  to follow the construction process of a
forbidden inner region with an increasing number of half-planes.

\begin{figure}[htp]
\centering
\includegraphics[height=5.0in,width=5.0in]{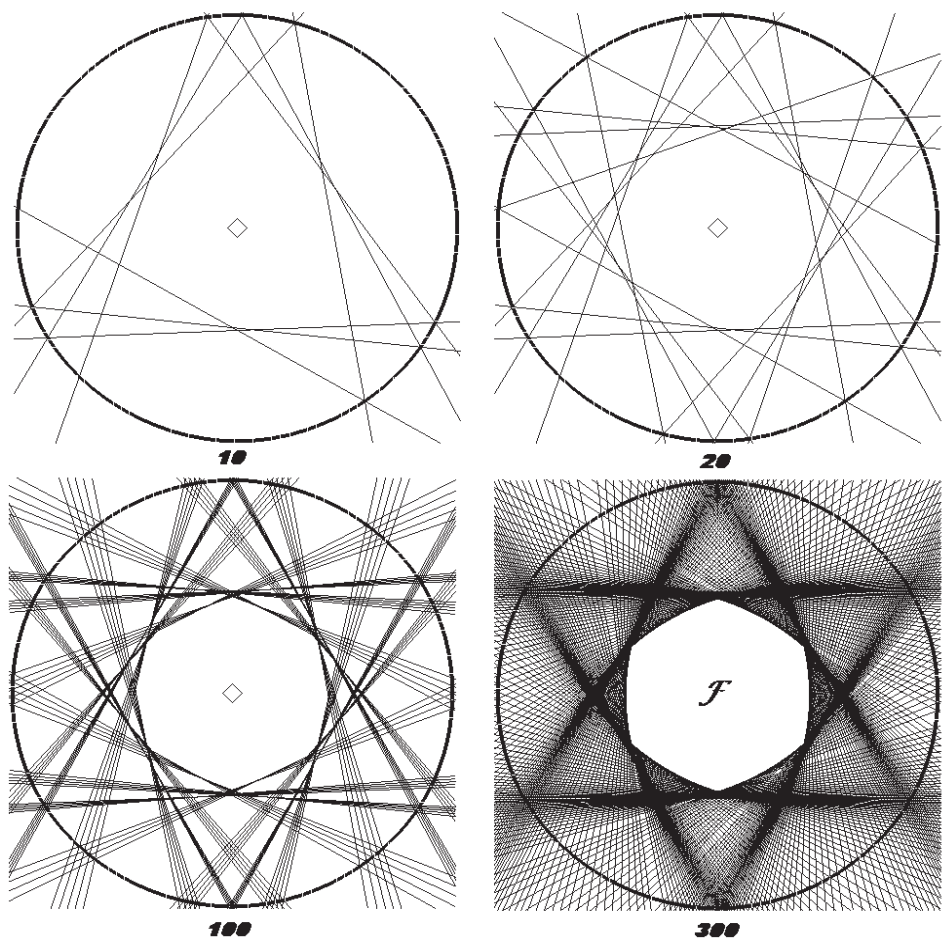}
\caption{Construction of forbidden inner region}\label{lab12}
\end{figure}




A problem remaining is that of trying to find an analytical expression for
 the piecewise smooth closed curve bounding  the forbidden inner region.
 The orbits shown in Figs. ~\ref{lab7},  ~\ref{lab8}, ~\ref{lab11} and ~\ref{lab13} strongly suggest
 that  a hexagon with curved  sides might be appropriate for the curve
 we are looking for. 
 
 \begin{figure}[htp]
\centering
\includegraphics[height=1.6in,width=6.0in]{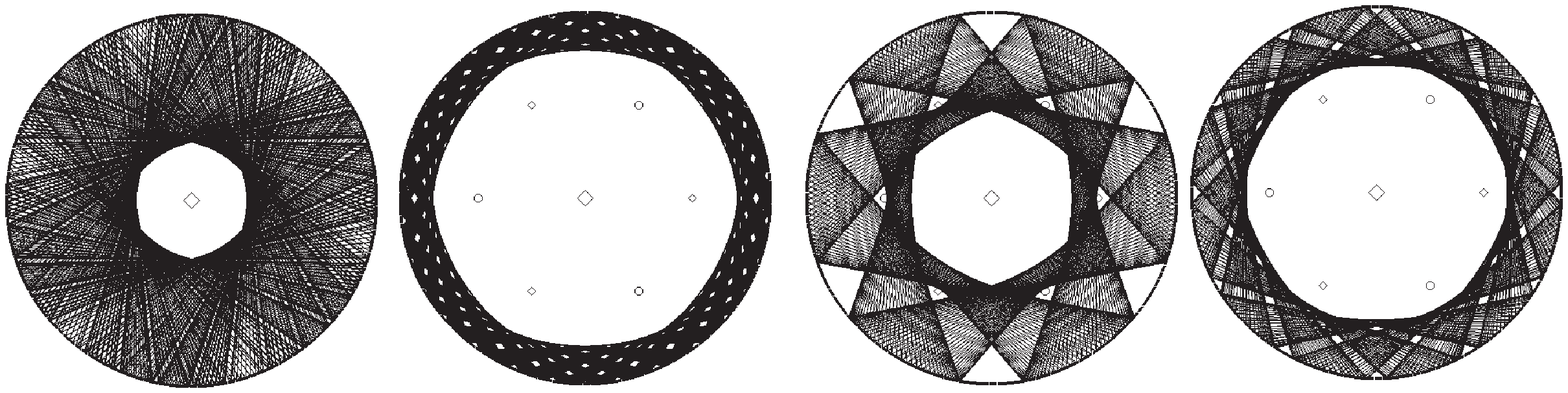}
\hspace{1.5in}\parbox{6in}{\caption{Forbidden inner regions for
groups of quasi-periodic orbits: $\qquad \qquad \qquad \qquad 6
\quad {5 \brace 2}'s,\quad 6 \quad {5 \brace 1}'s,\quad 2 \quad {3
\brace 1}'s, \quad \mbox{and} \quad 3 \quad {4 \brace
1}'s$\label{lab13}}}
\end{figure}

 Using   elliptical arcs for the  sides of the curved hexagon
 we get the results shown  in Fig.~\ref{lab14} and also in   Fig.~\ref{lab15}.

 \begin{figure}[htp]
\centering
\includegraphics[height=5.0in,width=5.0in]{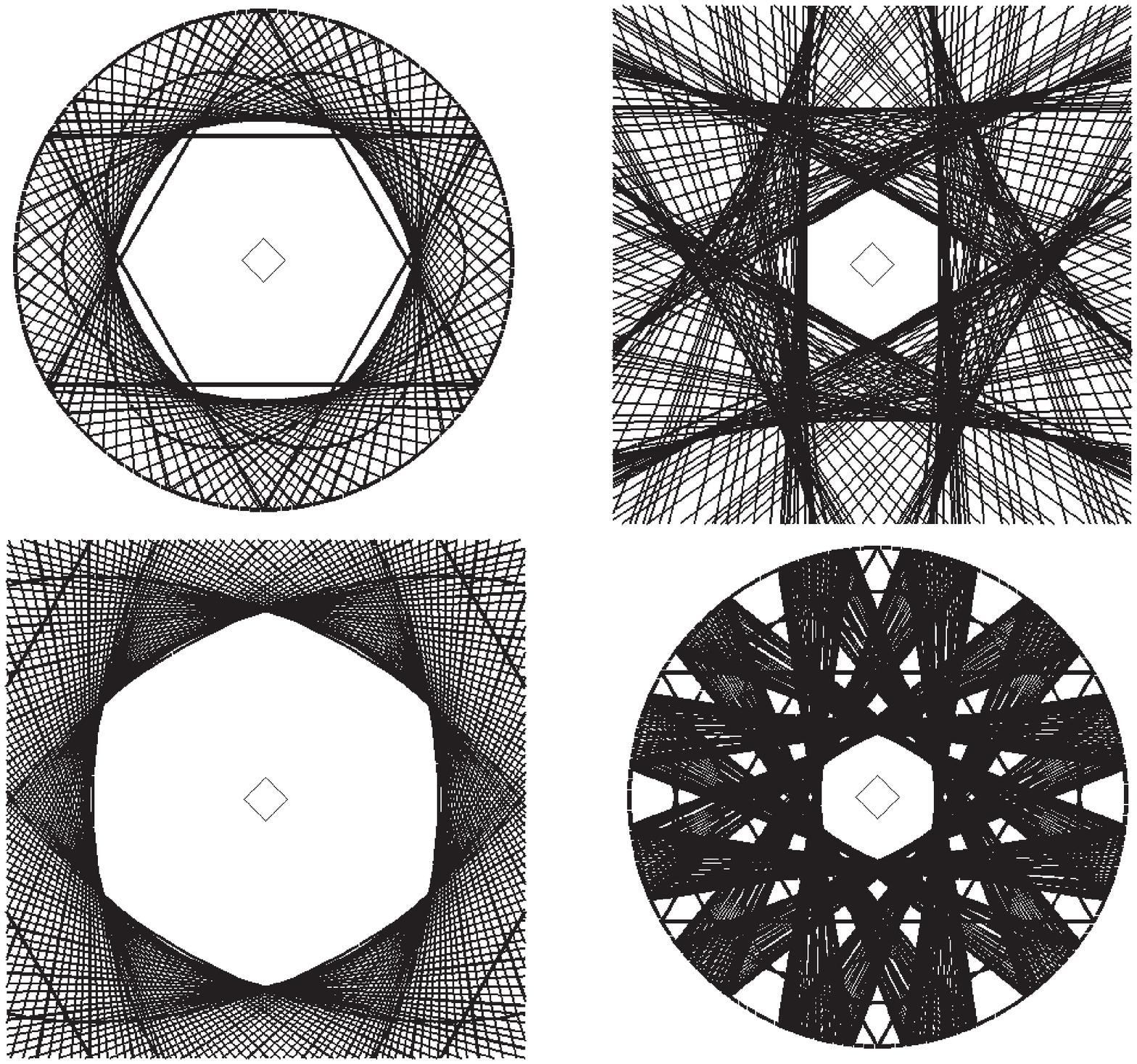}
\caption{Arcs of ellipses  separating  forbidden inner
region}\label{lab14}
\end{figure}

\begin{figure}[htp]
\centering
\includegraphics[height=1.8in,width=5.0in]{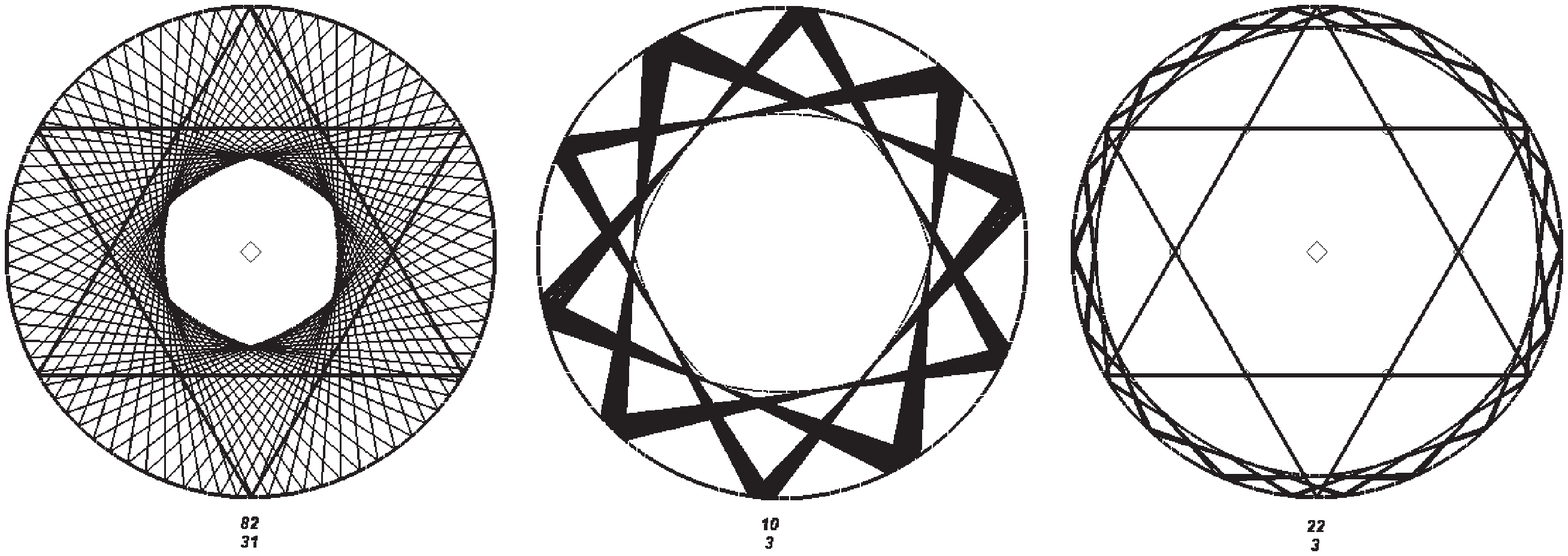}
\caption{Piecewise smooth curve  separating  forbidden inner
region}\label{lab15}
\end{figure}

 




 \section{Poincar\'{e} section}\label{poincare}

So far we have presented all orbits as merely sequences of directed
line segments in the configuration space. This approach, even if we
observe the evolution of a large number of these trajectories, gives
us only a limited understanding of the complicated dynamics
associated with the motion. A more convenient way to visualize the
dynamics is through the use of  Poincar\'e's surface of section
(SOS), which enables us to display the character of each particular
trajectory.
 The surface of section, see for instance Sussman and
Wisdom\cite{Sussman}, can be described by
 means of  two coordinates in the

\begin{enumerate}
\item {} horizontal direction: arc length $s$ from an arbitrarily chosen fixed  point $O$ on
the boundary of the billiard, and
\item{} vertical direction: the angle $\theta$ between positive
tangent direction and outgoing or reflected billiard ray at a point
of impact.

\end{enumerate}
This then defines the following   Poincar\'e section
$$\mathcal{P} := \{ (s,\theta) | s \epsilon [0,|\partial \Omega|], \theta \epsilon
 [0, \pi]\}$$

 There are, qualitatively speaking, three different
classes of trajectories possible on a SOS, clearly differentiated by
the dimension $d$ of the subspace of the section that they explore
(Berry\cite{Berry1981},Hayli\cite{Hayli1996},Korsch and Jodl\cite{Korsch},Sussman and
Wisdom\cite{Sussman}). They  might

\begin{enumerate}
\item{}  generate a finite set of $k$ discrete points
corresponding to a periodic orbit having period $k$ $(d = 0)$

 \item{} eventually fill out an invariant curve, which may either run
 from edge $s=0$ to edge $s=2\pi$ as an undulating line corresponding
 to a  whispering-gallery or a focal orbit,
 or consist of a group  of ovals or islands corresponding to a
 quasi-periodic orbit $(d = 1)$

 \item{} scatter over a  region eventually filling a whole area
 corresponding to an irregular or chaotic trajectory $(d = 2)$.

\end{enumerate}

Because our billiard has the same symmetry as a hexagon, we do not
need to  plot the  entire surface of section. We shall usually only
display the upper half of it and note the fact that the SOS is
periodic with period six.

To give the reader an idea of the  structure of the SOS we first
present an image in which we have included points and curves
corresponding to some 70 orbits, each followed through 480 bounces
(see Fig.~\ref{lab16}). In Fig.~\ref{lab17} we  do not show all
of them at once as was done before, but rather in small groups,  for
different ranges of the angle $\theta$.

\begin{figure}[htp]
\centering
\includegraphics[height=6.0in,width=6.0in]{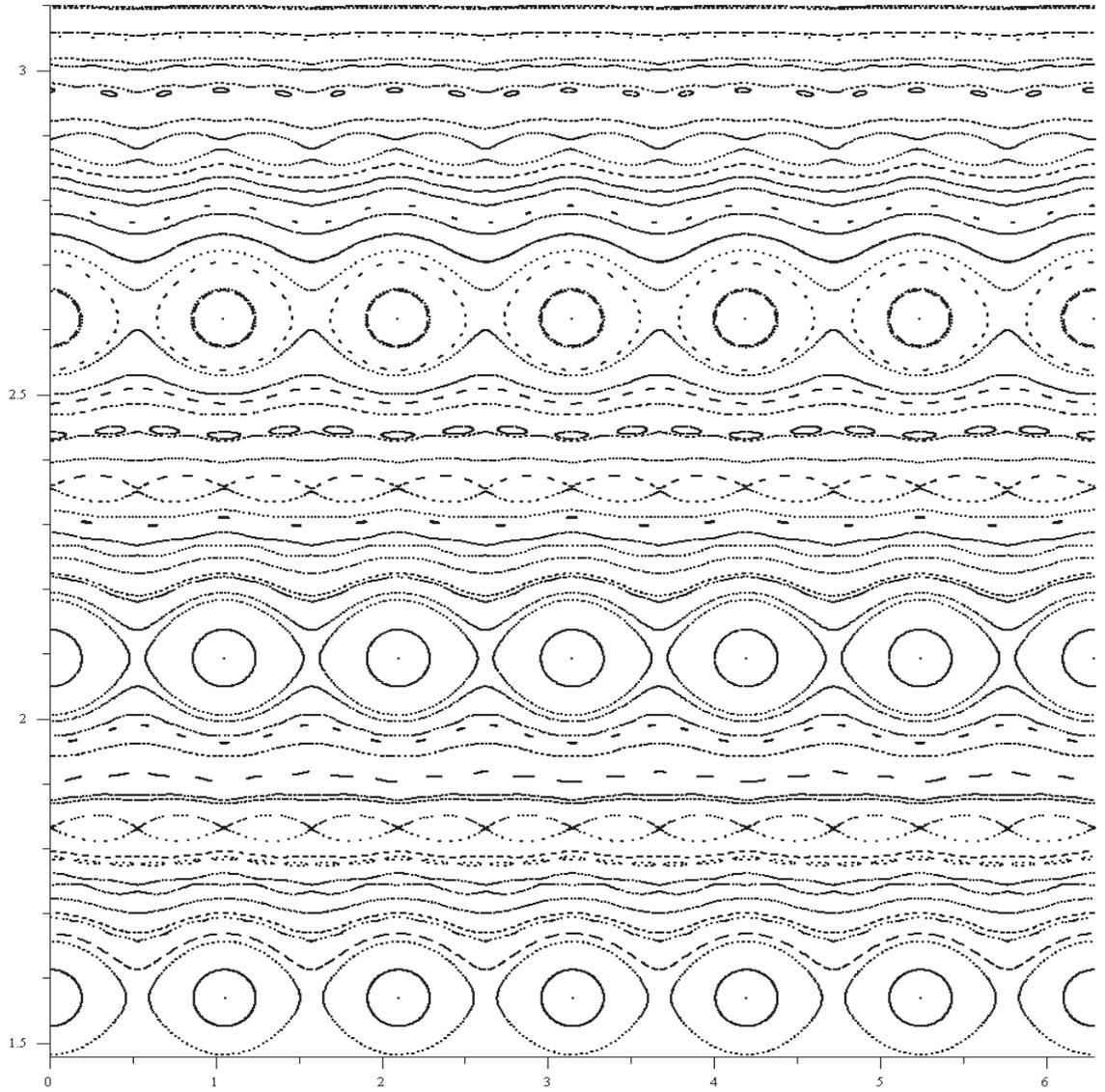}
\caption{Surface of section:
 70+ orbits followed through 480 bounces}\label{lab16}
\end{figure}


\begin{figure}[htp]
\centering
\includegraphics[height=6.0in,width=6.0in]{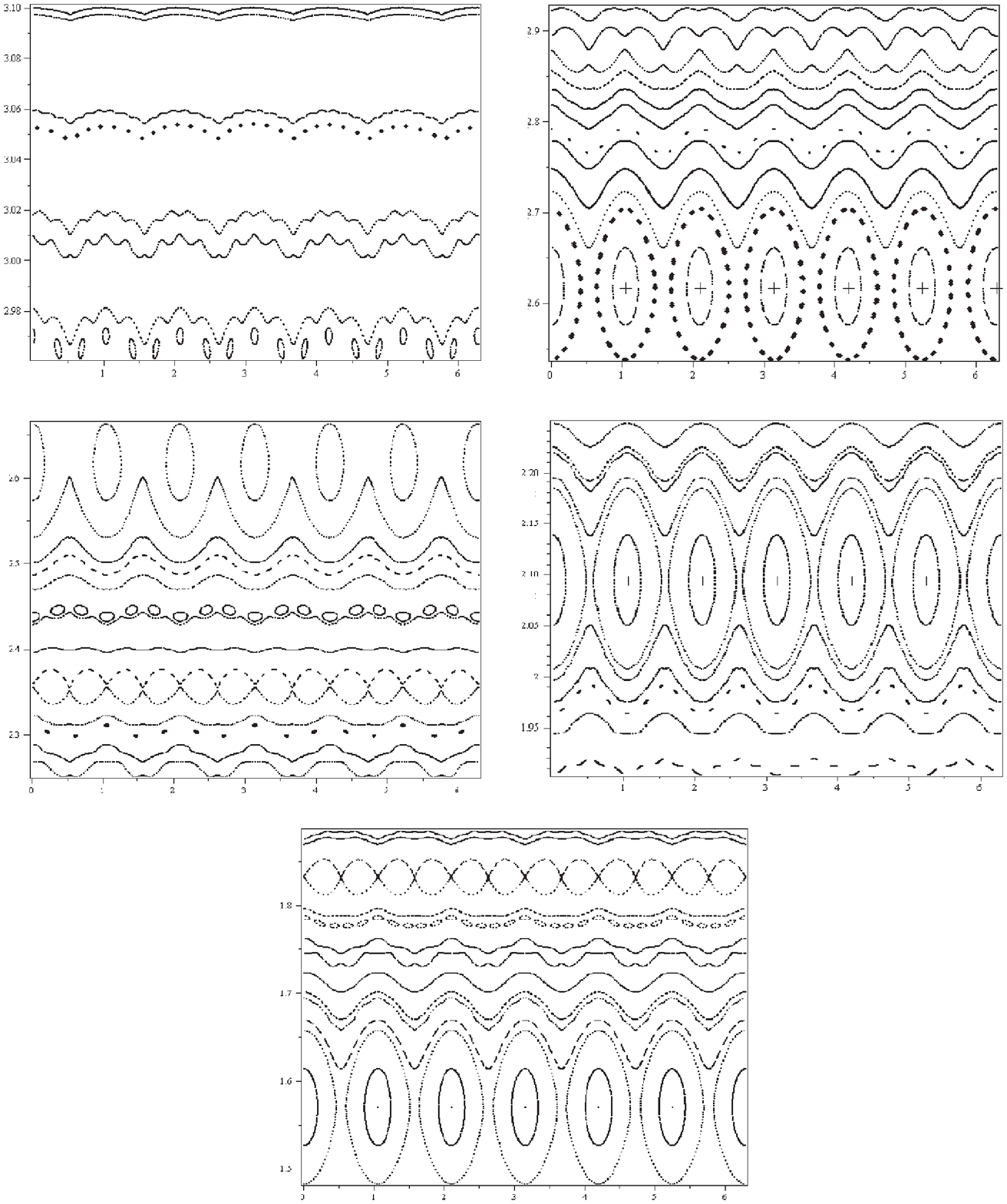}
\caption{Surface of section: same as Fig.~\ref{lab16}, but  more detailed
}\label{lab17}
\end{figure}


Several close-ups of the six-bounce orbits which can clearly be
recognized in  the plot on the right in the top row of Fig.~\ref{lab17}
 are shown in Fig.~\ref{lab18}.
 
\begin{figure}[htp]
\centering
\includegraphics[height=6.0in,width=6.0in]{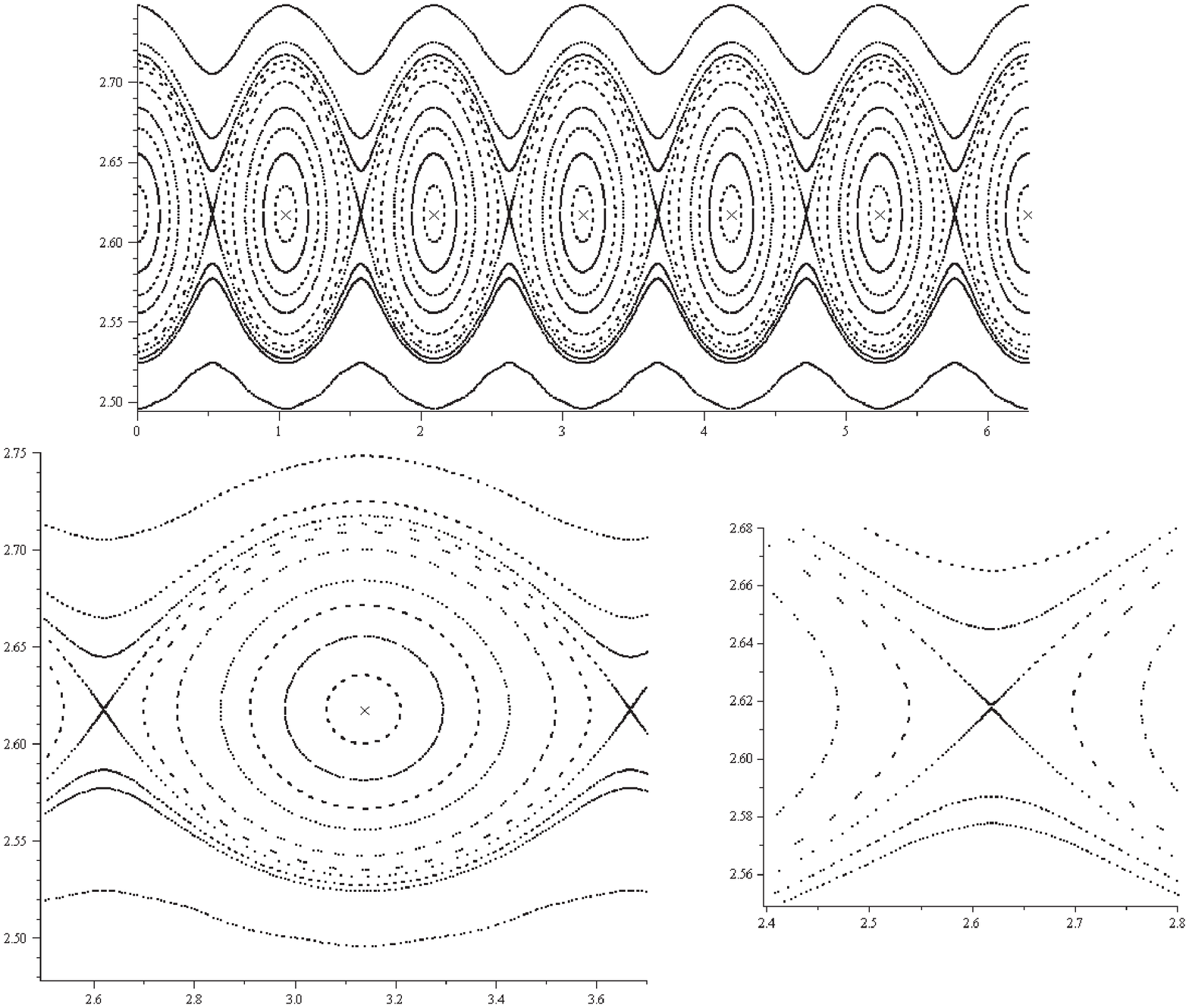}
\caption{ Surface of section: close-up of ${6 \brace 1}$
orbit}\label{lab18}
\end{figure}


Another close-up (Fig.~\ref{lab19}), this time of the bottom  plot
from Fig.~\ref{lab17} shows both undulating lines and ovals. The
big ovals surround the  invariant points corresponding
 to the stable periodic orbits associated with
${12 \brace 5},{54 \brace 23},{7 \brace 3}, \mbox{ and }{30 \brace
13}.$

 
 \begin{figure}[htp]
\centering
\includegraphics[height=5.0in,width=5.6in]{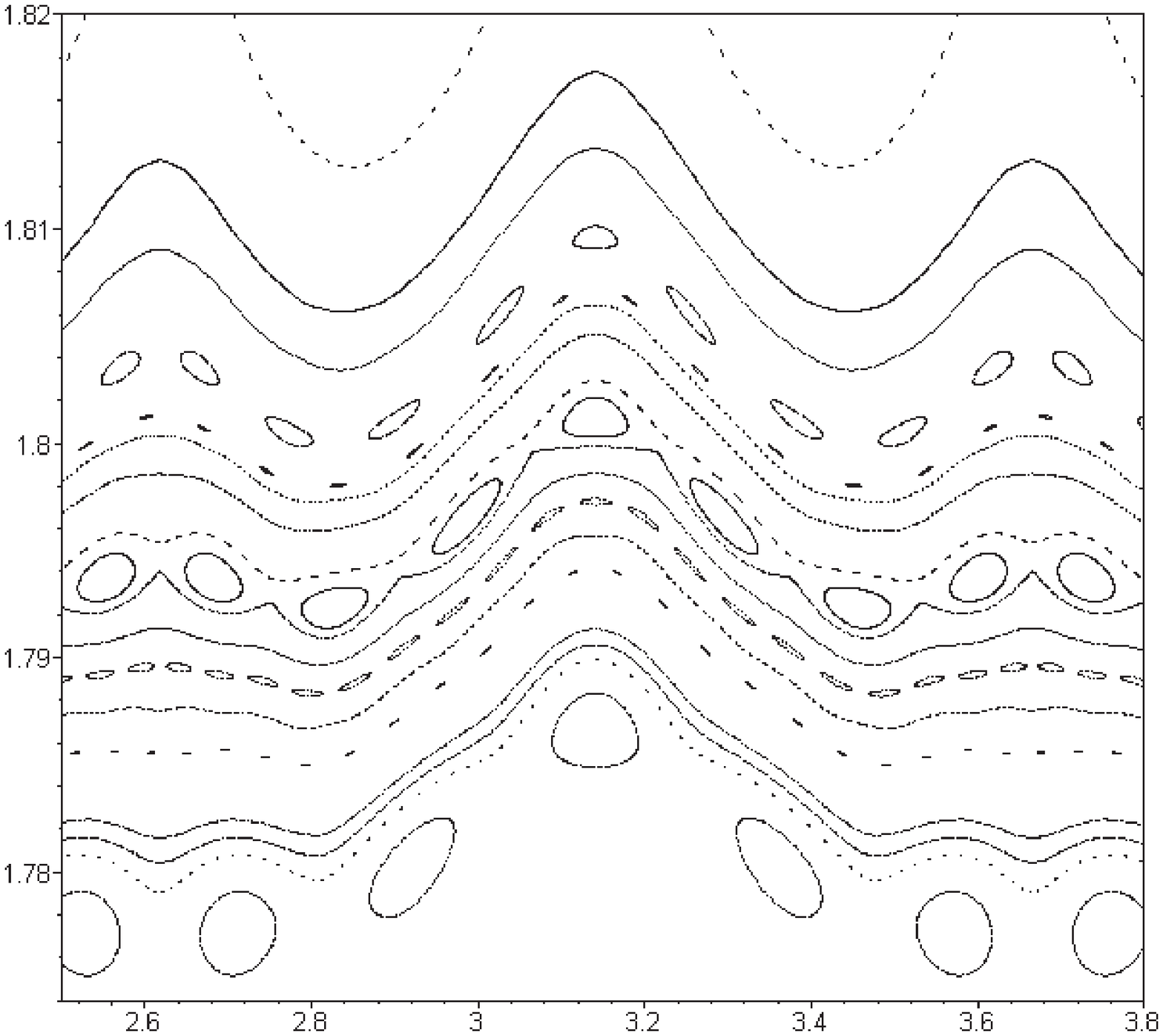}
\caption{ Surface of section  including fat ${12 \brace 5},{54 \brace
23},{7 \brace 3},  \mbox{and} {30 \brace 13}$} \label{lab19}
\end{figure}


Finally, we decided also to include the curve generated in the
surface of section by  a large number of focal orbits (see Fig.~\ref{lab20}). 
It is interesting to note that in this case it is even
possible to get an explicit expression for each arc that appears.
The coordinates of the points on a reference arc for the focal curve
are given by
$$ \left( E\left(\frac{\sqrt{(1+3\cos(2y))(-1+\cos(2y))}}{1-\cos(2y)},\frac{\sqrt{3}}{3}\right), y \right)$$
where $$\frac{\arccos(-1/3)}{2} \leq y \leq
\frac{\arccos(-1/2)}{2},$$

and  $ E(x,k)$ is simply an elliptic integral of the second kind
(see Davis \cite{Davis}).

\begin{figure}[htp]
\centering
\includegraphics[height=4.0in,width=4.0in]{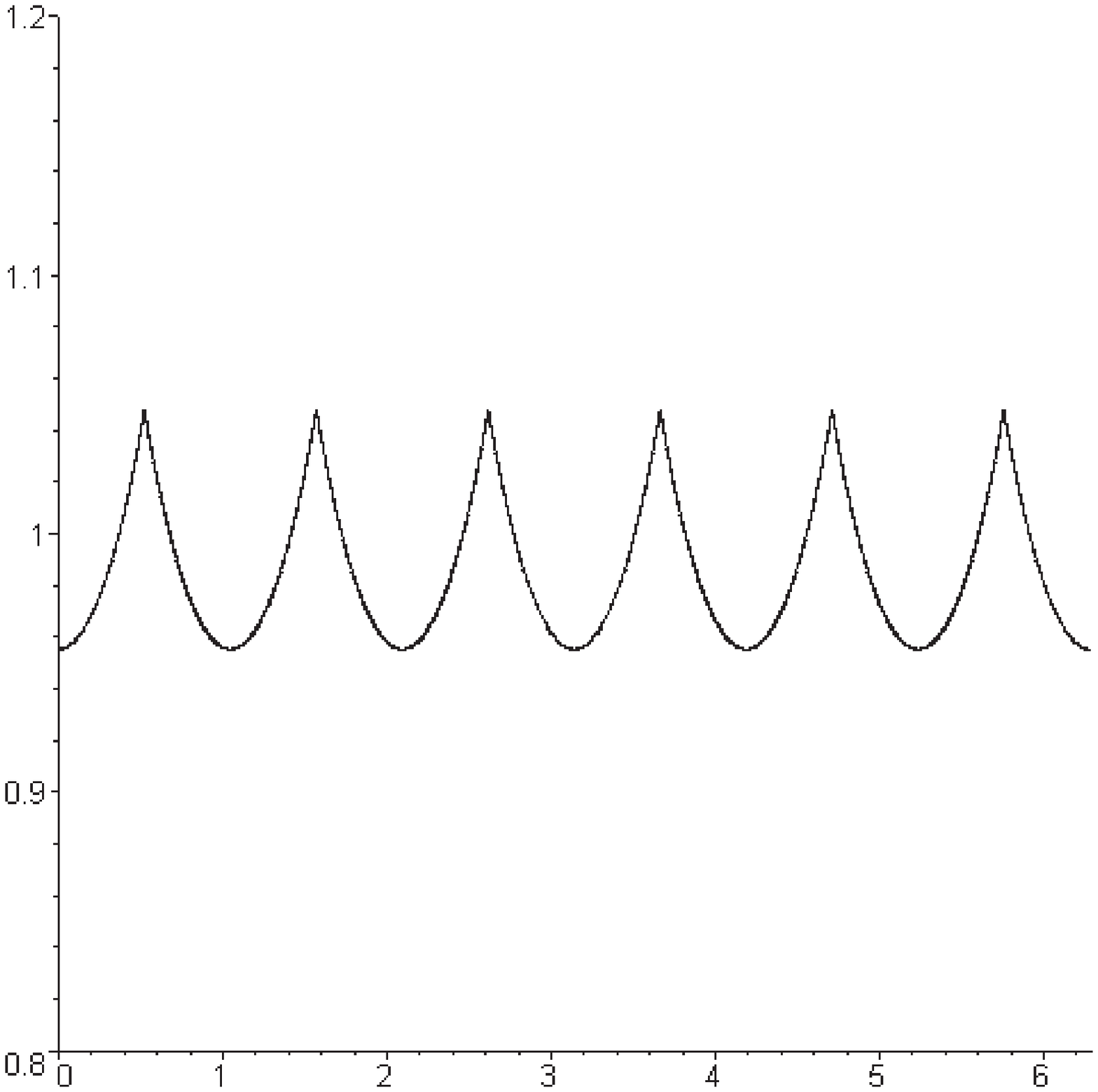}
\caption{ Surface of section for focal orbit}\label{lab20}
\end{figure}


\section{Additional periodic orbits and some comments on the structure of orbits in configuration space}

 All the periodic orbits we have  presented so far  are those
corresponding to  star polygons of the kind ${ n \brace k}$, 
with $1 \leq k < n/2$ and $(n,k) = 1,$ which  always come
in pairs:  one stable and the other one unstable.
 Moreover these orbits are isolated, as is customary
 for a generic billiard (Berry\cite{Berry1981}).

So, now, a natural question arises as to whether  there are any
other periodic orbits  besides those just mentioned.
 After all,   many authors, among them
Berry\cite{Berry1981},
Korsch and Zimmer\cite{Korsch2002},
 Okai et al.\cite{Okai1990}, and  Sieber \cite{Sieber1997}
 include periodic orbits in their work  which do not correspond to star polygons.
Of course we  found many additional periodic orbits for
our hexagonal string billiard of a different type, some of which are
shown in Figs. ~\ref{lab21} and ~\ref{lab22}.

In the captions of these figures  we have used the notation ${n \brace k}$ 
in spite of the fact that $n$ and $k$ are no
longer relatively prime, because we simply want to express the basic
fact that the orbit has period $n$ and that it winds around the
table $k$ times.

\begin{figure}[htp]
\centering
\includegraphics[height=5.0in,width=5.0in]{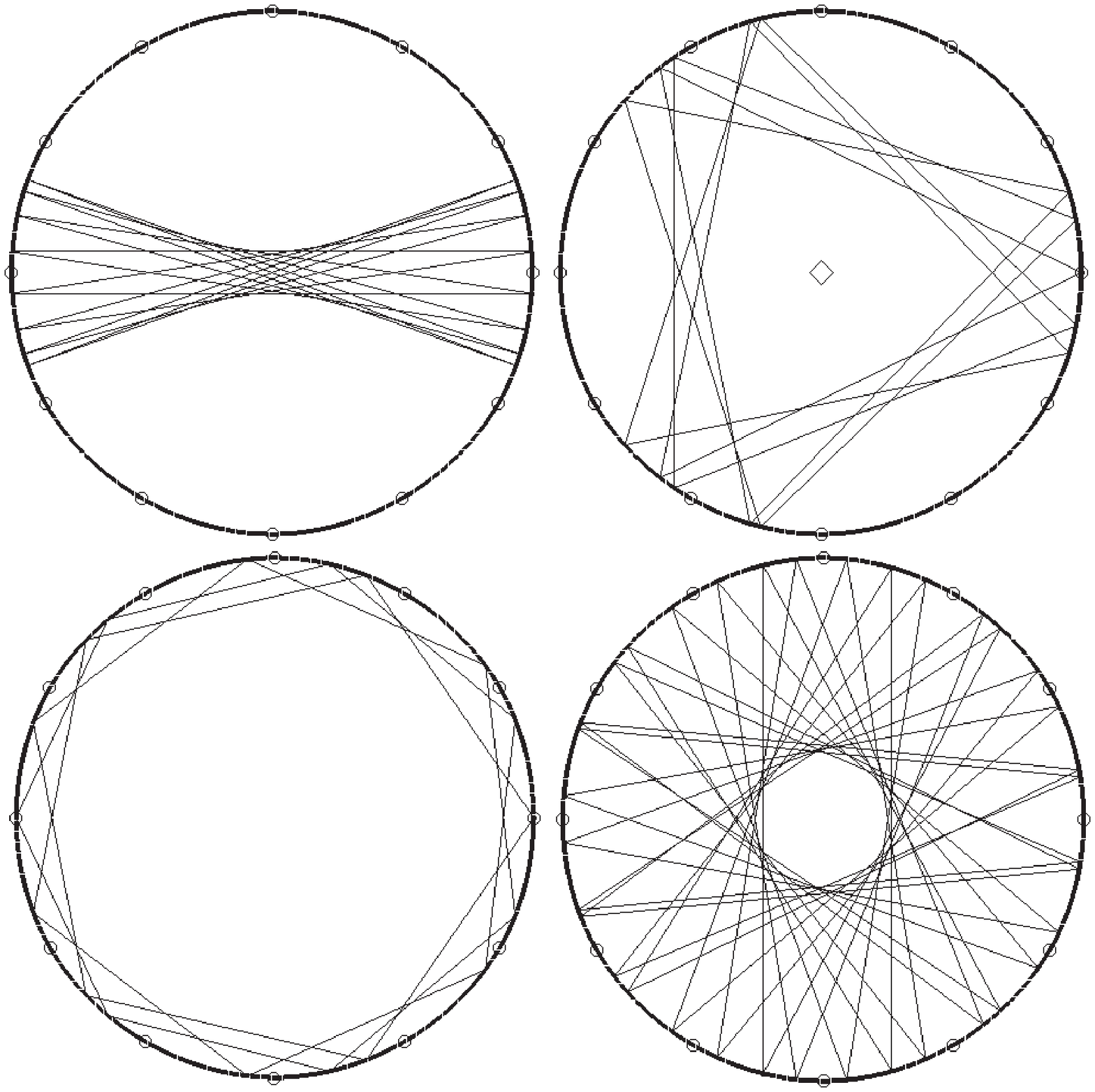}
\caption{ A librational, a $\left\{ 15 \atop 5 \right\}$, a $\left\{
18 \atop 3 \right\}$, and  a $\left\{ 36 \atop 15 \right\} $
orbit.}\label{lab21}
\end{figure}

\begin{figure}[htp]
\centering
\includegraphics[height=5.0in,width=5.0in]{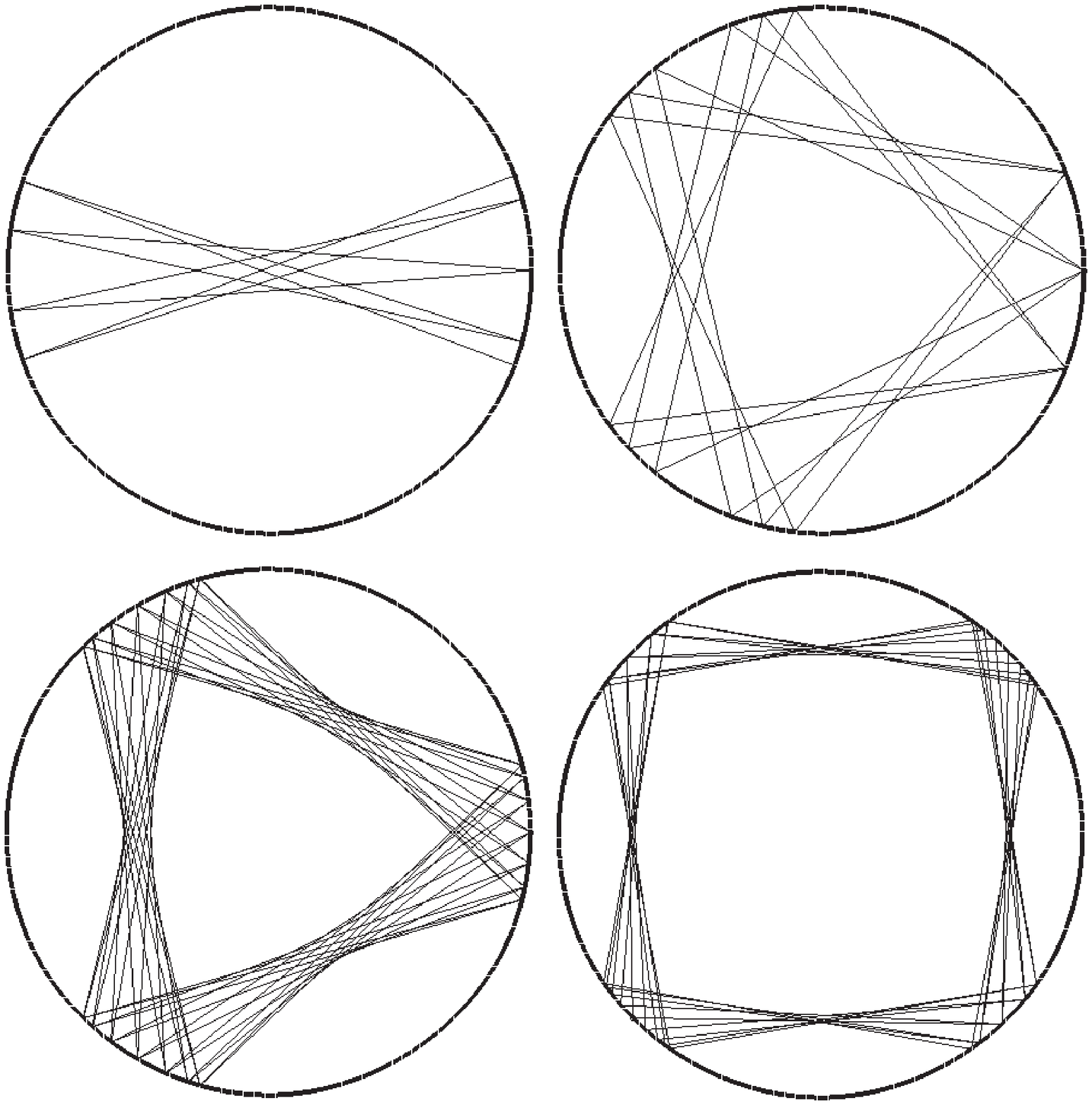}
\caption{ A self-retracing librational, a $\left\{ 18 \atop 6
\right\}$, a $\left\{ 42 \atop 14 \right\}$, and a $\left\{ 36 \atop
9 \right\}$ orbit.}\label{lab22}
\end{figure}


Interestingly enough, all of these orbits  are again isolated, but,
in contrast with the  ones corresponding to star polygons, these
additional periodic orbits turn out to be neutrally stable!


Now let us include a brief comment about the organization of
periodic and quasi-periodic trajectories in configuration space.

Periodic trajectories inside an elliptic billiard are very well organized.
Recall that for  fixed $n$ and $k$,  relatively prime positive  integers with $k < n/2$, 
there is  a  continuous family of periodic orbits of the form ${n \brace k}$. 
All the trajectories corresponding to such a family are arranged around 
a common ellipse or hyperbola (see Poncelet's theorem~\cite{Tabachnikov}).

All periodic  orbits for our billiard  are isolated. 
So for  fixed  $n$ and $k$,  where $n$ and $k$ are  relatively prime positive  integers with $k < n/2$,
we want to consider not only  the periodic orbits of the form ${n \brace k}$
but  all the quasi-periodic orbits or fat polygons ${n \brace k}$ as well.
If  there are any periodic orbits of the form ${m \brace j}$, $j < m/2$ with $m/j = n/k$
 we need to include  them too. 
We want to see how they  are organized in configuration space: if they are arranged around a
common forbidden inner region.

 To begin we will show what ``goes on between two consecutive'' unstable periodic
orbits:  ${4 \brace 1}$ and  ${3 \brace 1}$.

The claim now is that for either $n = 3$ or $4$ the pair of ${ n \brace 1 }$   unstable periodic
orbits ``encloses'' 
a  ${ n \brace 1 }$ stable periodic orbit, periodic orbits of the form ${m \brace j}$, $j < m/2$ with $m/j = n$
and fat polygons ${n \brace 1}$ as well.

In Fig.~\ref{lab23} we can appreciate a pair of unstable periodic
orbits ${4 \brace 1}$. Also included are a stable ${ 4 \brace 1}$ trajectory,
 a  neutrally stable periodic orbit ${36 \brace 9}$, as well as three quasi-periodic orbits.


\begin{figure}[htp]
\centering
\includegraphics[height=4.0in,width=6.0in]{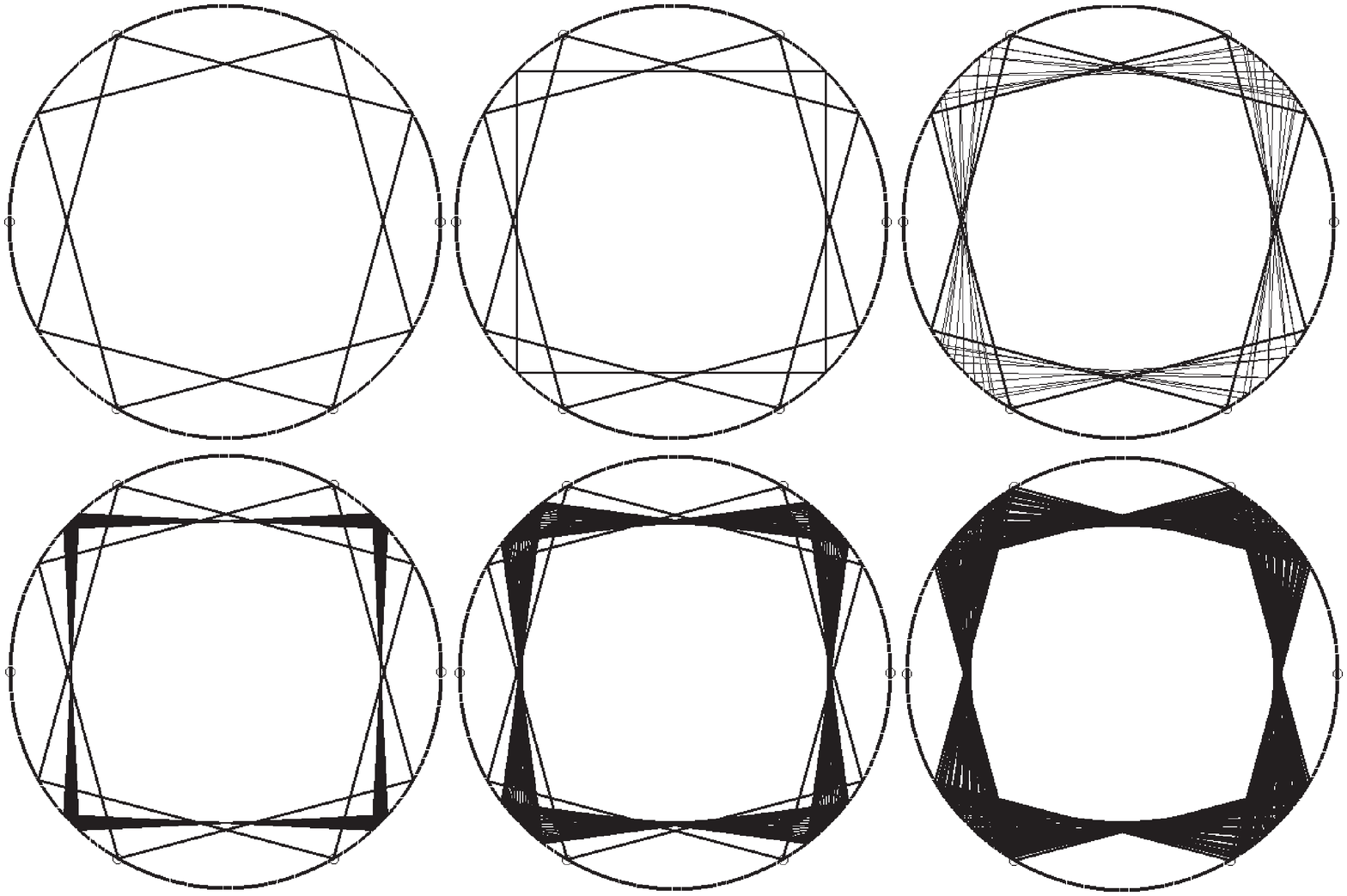}
\caption{ Trajectories related to the four-bounce periodic
orbit}\label{lab23}
\end{figure}




For  the second example we  start  with a pair of three-bounce
periodic unstable orbits. But now   we have also included a Reuleaux
type triangle (see Fig.~\ref{lab24}), which will
serve as a  common forbidden inner region.

For the three
neutrally stable periodic orbits in  Fig.~\ref{lab25} as
well as for the three quasi-periodic orbits we observe 
that they are confined to the region shown
on the right hand side of Fig.~\ref{lab24}.

\begin{figure}[htp]
\centering
\includegraphics[height=2.0in,width=4.0in]{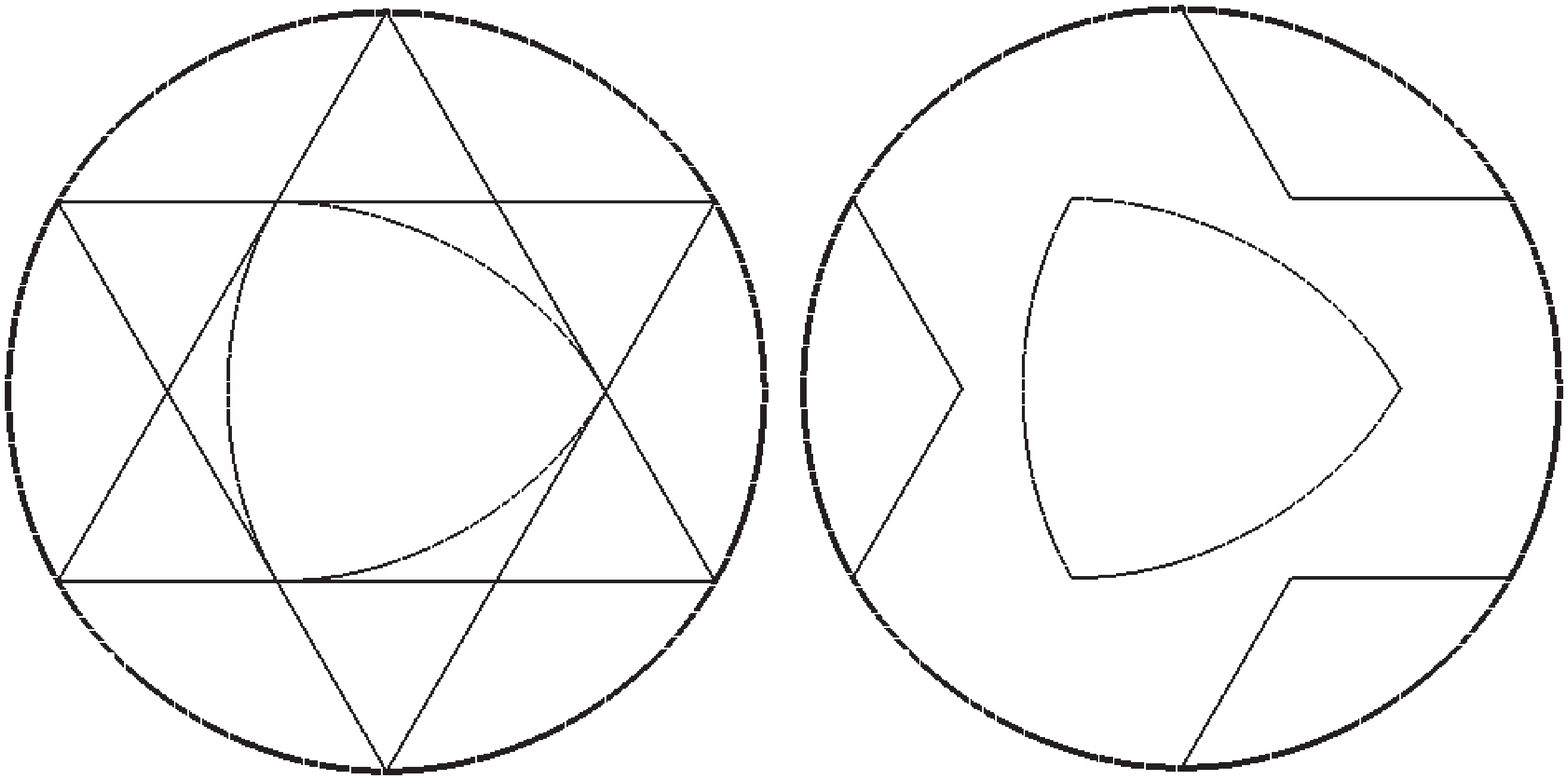}
\caption{ Unstable three bounce orbits and Reuleaux
triangle}\label{lab24}
\end{figure}


\begin{figure}[htp]
\centering
\includegraphics[height=4.0in,width=6.0in]{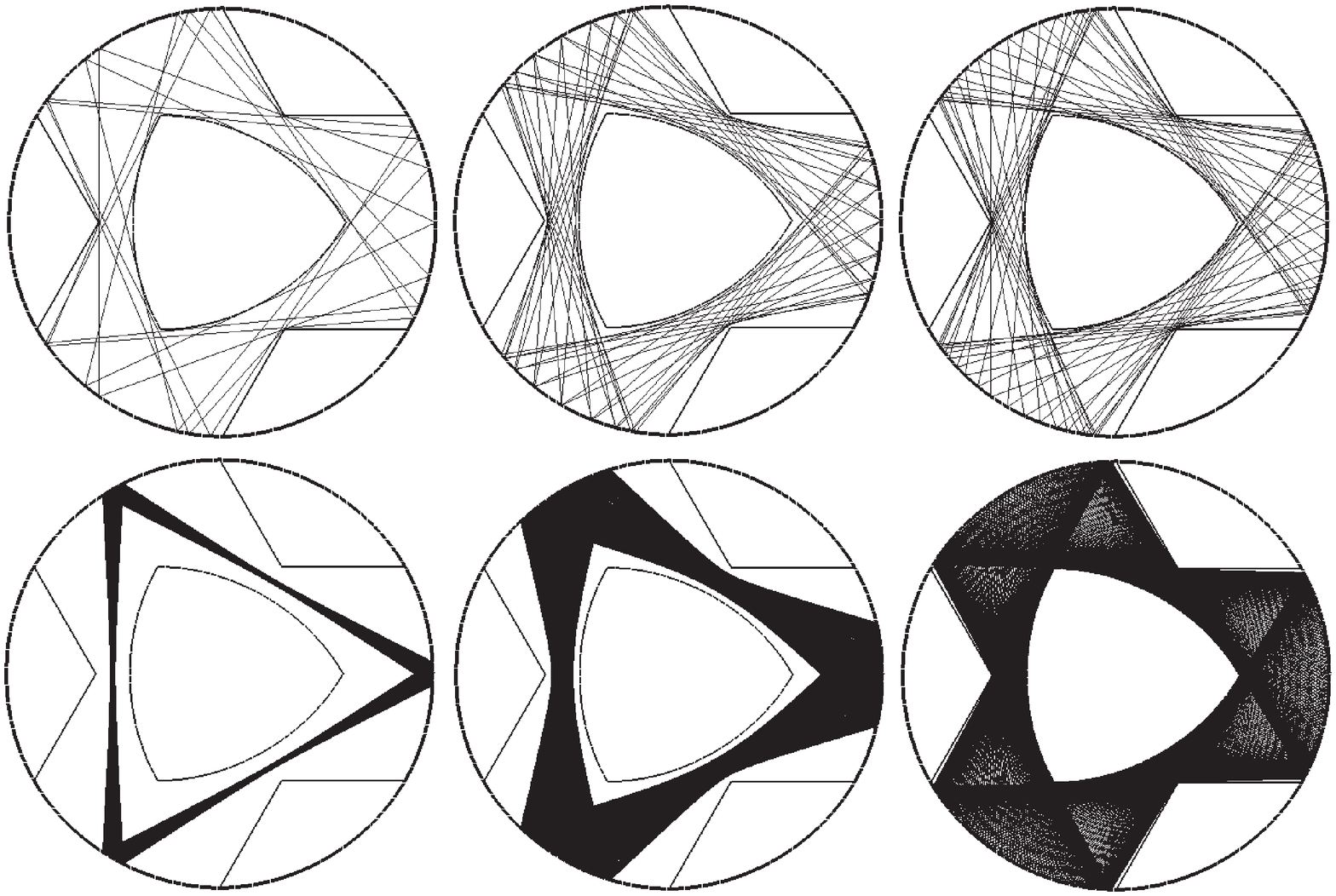}
\caption{ Trajectories related to the three-bounce periodic
orbit}\label{lab25}
\end{figure}


To illustrate the general situation  several pairs of consecutive
unstable periodic orbits  with a single quasi-periodic orbit
between them are shown in Fig.~\ref{lab26}.




\begin{figure}[htp]
\centering
\includegraphics[height=4.0in,width=6.0in]{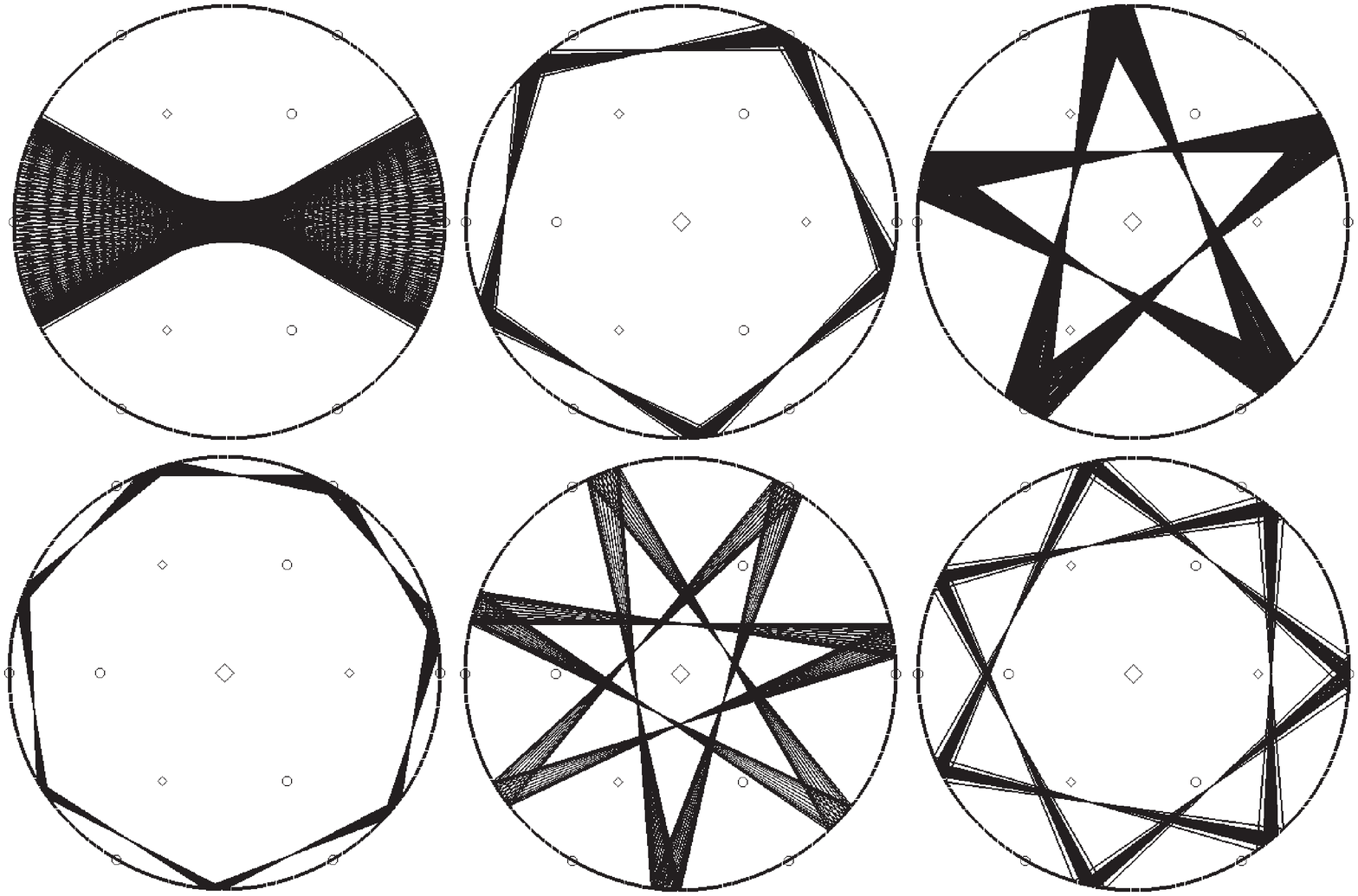}
\caption{ Pairs of unstable periodic orbits and single
quasi-periodic orbit between them }\label{lab26}
\end{figure}

To see how all the orbits related to the ${5 \brace 2}$ orbit are organized in 
configuration space see  Fig.~\ref{lab27}.

\begin{figure}[htp]
\centering
\includegraphics[height=3.8in,width=4.0in]{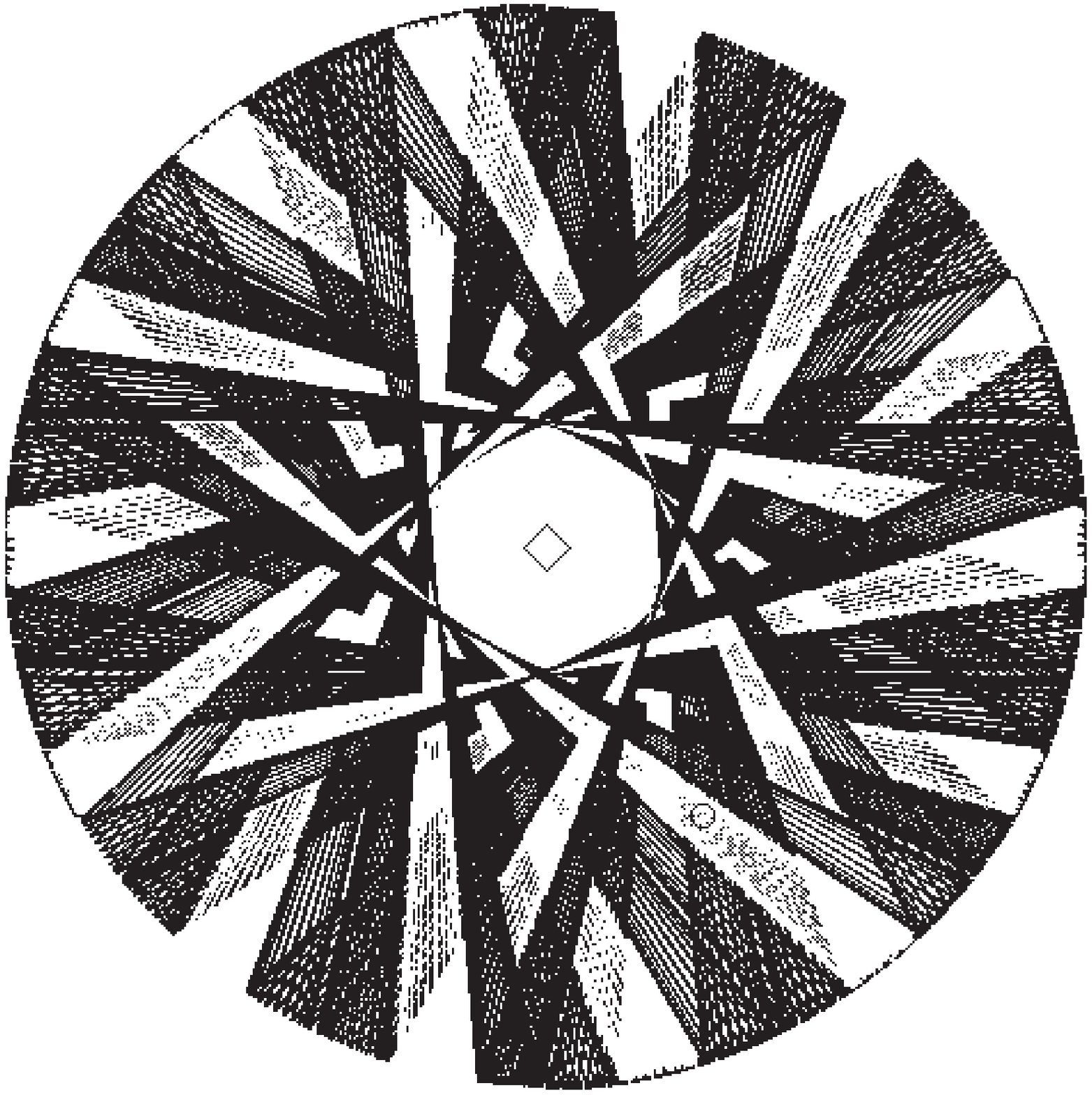}
\caption{${5 \brace 2}$ related orbits arranged around a
common forbidden inner region}\label{lab27}
\end{figure}

\section{Concluding remarks}

\begin{flushright}
 \begin{scriptsize}
EKELAND \cite{Ekeland}:  ``there is no
mistaking an integrable system  \\
with a nonintegrable one.''
 \end{scriptsize}
\end{flushright}


From the Poincar\'e surface of section one can get some pretty clear idea of the general
 behaviour a dynamical system. In it we can identify
 \begin{itemize}
 
 \item periodic trajectories  which appear as  a finite collection  
 of points, 
 \item quasiperiodic trajectories which  appear as  one-dimensional curves 

  \item and finally   irregular or 
chaotic trajectories which appear as a  scatter of points limited to a finite area.  
\end{itemize}
In the  Poincar\'e surface of section for a typical billiard we are able to find all of these different types of trajectories. 
A  Poincar\'e surface of section  consisting 
 exclusively of   closed curves and periodic points   suggests completely
 regular (integrable) behavior.
 
 The  extreme cases of completely regular  and of fully chaotic behavior 
are both a very uncommon occurrence.  Most
often one finds  that the  Poincar\'e SOS is filled with a
mixture of both regular and irregular orbits (see Korsch and Zimmer
\cite{ Korsch2002}).

So the question is:  exactly,  how exceptional is the hexagonal string billiard?

The following are the most relevant properties of this billiard:

\begin{enumerate}

\item{}
 The boundary of  our billiard
is twice continuously differentiable and  has strictly positive curvature.

\item{}
Our billiard is not a  simple deformation or perturbation of the circular billiard.
In addition to the planar region bounded by six elliptical arcs we also have a set 
of very special points: the foci, whose influence cannot be ignored. 

\item{}
Our billiard exhibits three types of trajectories: focal, outer, and
inner (see Theorem \ref{ppal}).

\item{}
It has a special caustic and we can associate a forbidden  region with any trajectory.

\item{}The  Poincar\'e surface of section is filled
solely with periodic points and invariant curves,  while chaotic area-filling orbits
are totally absent.

\item{}
In the Poincar\'{e}  surface of section there are no island chains.

\end{enumerate}

Finally let us briefly mention a  long-standing open problem, very often attributed to G. D. Birkhoff,
although  originally stated    by   Poritsky~\cite{Poritsky1950}(see also Gutkin~\cite{Gutkin2003}),
the so-called   Birkhoff-Poritsky  conjecture.
It basically states that among all billiards inside smooth closed convex curves, only billiards in ellipses are integrable. This means that aside from the energy, there must exist  a second  conserved quantity. 
Finding such a constant of motion  is usually a very difficult task. However 
in the case of the elliptic billiard it is well-known: it is simply the product  of the two angular momenta about the two foci (see Zhang et al.~\cite{Zhang1994}, Berry~\cite{Berry1981}, Korsch and Zimmer~\cite {Korsch2002}).

So, we want to  conclude with a question:

from all the evidence gathered for the hexagonal string  billiard, can we still be certain  that 
the elliptic billiard is the  only integrable convex billiard?

\section*{Acknowledgements}

For the  extensive programming task involved in this project we have
benefited greatly from  Carl Eberhart's efforts, who simulates
billiard-ball trajectories on both circular and elliptical tables in
\emph{Reflective paths in an ellipse}\cite{Eberhart}. We merely have
adapted some of his ideas for our own particular  purposes. We also
want to thank Martin Sieber for providing us, almost instantly, with
a hard to get and much needed reference.

\appendix
\section{}\label{A1}

\begin{proof}[Proof of Theorem~\ref{gardener}]
\mbox{}
Consider now the ellipses $\mathcal{E}_{1}$ with foci $F_{1}, F_{3}$  and $\mathcal{E}_{n}$ with  $F_{n},F_{2}$   and  where  the sum of the distances from the two foci is given by~$ l - 2(n-2) = 2+2d$.
  The length of the portion of the string that is in contact
with the $n$-gon is $2(n-2)$.

The ellipses $\mathcal{E}_{1}$   and $\mathcal{E}_{n}$  are of course very smooth
  closed curves and
  we need merely verify that the equation for $\mathcal{E}_{1}$ has the same function value and first and second derivatives as the equation for $\mathcal{E}_{n}$ at  $x=0$  thus ensuring $C^{2}$ continuity (see Fig.~\ref{lab2}).

Letting  $P(x,y)$ be an arbitrary point, then  for the ellipse $\mathcal{E}_{1}$:

$$\left|F_{1} P \right| +   \left|F_{3} P \right| =  2 - \frac{2}{\cos \alpha }$$ or

$$\sqrt{(x+1)^2+y^2}+\sqrt{(x-1+2 \cos \alpha)^2+(y+2 \sin \alpha)^2} = 2-\frac{2}{\cos \alpha} $$

Similarly, for   the ellipse $\mathcal{E}_{n}$:

$$\left|F_{n} P \right| +   \left|F_{2} P \right| =  2 - \frac{2}{\cos \alpha }$$  or

$$\sqrt{(x+1-2 \cos \alpha)^2+(y+2 \sin \alpha)^2}+\sqrt{(x-1)^2+y^2} = 2-\frac{2}{\cos \alpha}$$

 By  bringing all the terms from the right-hand side to the left-hand side,
  we get the  equations
  $$e_{1}(x,y) = 0 \quad \mbox{and} \quad  e_{n}(x,y) = 0$$

  Notice that since $$e_{n}(x,y) = e_{1}(-x,y)$$  the values of the functions and of the second derivatives for the  equations  of both   ellipses  have to coincide  when $x=0$.


 A simple, but tedious process of straightforward differentiation and substitution shows that
 at the common intersection point  $ G_{1} = (0,- \frac{\sin \alpha}{\cos \alpha})$ the first derivative  is $0$ and the second derivative  is $ (\cos \alpha-1)\cos \alpha\sin \alpha / (-1+2\cos \alpha)$ for both equations.

 \end{proof}
 
 
 \section{}\label{B1}
 \begin{proof}[Proof of Theorem~\ref{ppal}]
 \mbox{}
 \begin{enumerate}
 \item[i)]
 There are two extreme supporting lines at each vertex and 
 any  ray  that lies in the angle formed by the them is also a supporting line of  $K$.
 
 So to start, without loss of generality, let us  consider a billiard segment that lies in the angle formed by the
 two extreme supporting lines at vertex $F_{2}$. It will have necessarily one extreme point on arc $\stackrel{\frown}{24}$
 and the other on  arc $\stackrel{\frown}{62}$. If the extreme point on arc $\stackrel{\frown}{24}$
 is chosen as the  initial point $P_{0}$ 
 then we get a ray passing through the vertex $F_{2}$ which
 will  then intersect  the  arc $\stackrel{\frown}{62}$ at some point $P_{1}$.

Because of the  reflective property of ellipses a ray leaving one   focus, in this case
$F_{2}$, will reflect off the ellipse corresponding to the arc $\stackrel{\frown}{62}$ and 
pass through the second focus $F_{6}$. 

But then we get again a  ray (through $P_{1}$ and $F_{6}$) that lies in the angle formed by the   two extreme supporting lines for the vertex $F_{6}$ so it has to be on a supporting line itself.

 That just means that the billiard trajectory  remains on  a supporting line.
 
 \item[ii)]
 Let us now consider   the angle formed by the two supporting lines of $K$ through a point $P_{0}$ on arc $\stackrel{\frown}{24}$. One of them goes  through $F_{2}$ and the other through $F_{4}$. Because of this they
 are  also supporting lines of the closed segment $F_{2}F_{4}$, which is the segment between the two foci of the ellipse corresponding to arc $\stackrel{\frown}{24}$. 
 Now we can consider two
 different types of rays issuing from $P_{0}$: those strictly inside and those strictly outside the convex angle $F_{2}P_{0}F_{4}$. The first ones will intersect the  interiors of both $K$ and 
 of the segment $F_{2}F_{4}(see $\cite{Poorrezaei2003}), whereas the second ones  do not  intersect  either $K$ or the closed segment $F_{2}F_{4}$.
 
 In the first case we get a segment $P_{0}P_{1}$ where  $P_{1}$ is some point
belonging to the arc $\widehat{j,j+2}$. Let us  consider the reverse trajectory,
the one  starting at  point $P_{1}$ and ending at  point $P_{0}$. 
It, of course,  intersects the interior of $K$. But then it has to intersect the interior
 of the segment $F_{j}F_{j+2}$. So then the original trajectory
intersects this segment and it has to do so again after reflection with the boundary.
This implies that the reflected segment of the trajectory  then also intersects the  interior of $K$.
\item[iii)] We leave this case to the reader.
\end{enumerate}
 \end{proof}

 \section{}\label{C1}
 
 \begin{proof}[Proof of Theorem~\ref{angle}]
\mbox{}
Because of the symmetry present in our billiard,
we can restrict attention to a typical arc, say 
$\stackrel{\frown}{24}$.

Let $P(x,y)$ be  a point on this  arc (See Fig.~\ref{lab3}).
Then

$$\cos \gamma = \frac{-12+a^2+b^2}{2 a b},$$
where
$$ a= \sqrt{(x-1)^2+(y+\sqrt{3})^2},b=\sqrt{(x-1)^2+(y-\sqrt{3})^2}$$
so
 $$\cos \gamma =\frac{-12+2 (x-1)^2+(y+\sqrt{3})^2+(y-\sqrt{3})^2}{ 2 \sqrt{(x-1)^2+(y+\sqrt{3})^2} \sqrt{(x-1)^2+(y-\sqrt{3})^2}}$$
Substituting the equation for the ellipse 
$$ (x-1)^2=6(1-\frac{y^2}{9}) $$
we get
$$\cos \gamma = \frac{\frac{-4 y^2}{3}+(y+\sqrt{3})^2+(y-\sqrt{3})^2}{2   \sqrt{6-\frac{2 y^2}{3}+(y+\sqrt{3})^2} \sqrt{6-\frac{2 y^2}{3}+(y-\sqrt{3})^2}}$$
or after simplifying
$$ \cos \gamma = \frac{y^2+9}{27 -y^2}$$

Since $$-\sqrt{3} \leq  y  \leq \sqrt{3}$$ it is easy to see that
$$ \frac{1}{3} \leq \cos \gamma \leq \frac{1}{2} $$
so that 
 $$ 60^{\circ} \leq  \gamma \leq 70.53^{\circ}$$

\end{proof}
 \section{}\label{D1}

\begin{proof}[Proof of Theorem~\ref{triangle}]
\hfill \mbox{}
For the notation see Fig.~\ref{lab10}.
Consider the sequences  $\{s_{i}\},\{\varphi_{i}\},\{\alpha_{i}\}.$
It is clear  that they  are  bounded:
\begin{itemize}
\item  $ 2 \leq s_{i} \leq 4$
\item  $\pi/6 \leq \varphi_{i} \leq \pi/2$
\item $\pi/3  \leq \alpha_{i} \leq \arccos \left( \frac{1}{3} \right)$(see Thm.\ref{angle}).
\end{itemize}

We will show that   $\{\varphi_{i}\},\{s_{i}\}$ are both nondecreasing sequences.

Subtracting
$$\varphi_{i+1}+\varphi'_{i+1}=2\pi/3$$
from
$$\alpha_{i}+\varphi_{i}+\varphi'_{i+1}=\pi$$
we get
$$
0 \leq \alpha_{i}-\pi/3=\varphi_{i+1}-\varphi_{i}
$$
and therefore $\varphi_{i} \leq \varphi_{i+1}$.

From  
$$s_{i}^2 = 12+t_{i}^2-2 \cdot 2 \sqrt{3} t_{i}\cos \varphi_{i} $$
and $$s_{i} + t_{i} = 6$$
we get
$$s_{i} = \frac{4\sqrt{3}-6\cos \varphi_{i}}{\sqrt{3}-\cos \varphi_{i}}$$

Note that the function $f(\varphi) = \frac{4\sqrt{3}-6\cos \varphi}{\sqrt{3}-\cos \varphi}$
is nondecreasing.    

So if $\varphi_{i} \leq \varphi_{i+1}$ then $s_{i} = f(\varphi_{i}) \leq  f(\varphi_{i+1})=s_{i+1}$.

Now  $\{\varphi_{i}\}$ and  $\{s_{i}\}$ both converge  because 
they are monotone  bounded sequences.

But then as $i \rightarrow \infty$
$$\alpha_{i}-\pi/3=\varphi_{i+1}-\varphi_{i} \rightarrow 0$$
and therefore $$\alpha_{i}\rightarrow \pi/3$$

From
$$ 12 = s_{i}^2+t_{i}^2 - 2s_{i}t_{i}\cos \alpha_{i}$$
and 
$$ s_{i} + t_{i} = 6$$
we obtain
$$ \cos \alpha_{i} = -\frac{12+s_{i}^2-6 s_{i}}{s_{i}(-6+s_{i})}$$

So if $ i \rightarrow \infty$ then  $s_{i} \rightarrow  s^{\ast}$ and

$$ \frac{1}{2} = -\frac{12+{s^{\ast}}^2 -6 s^{\ast}}{s^{\ast}(-6+s^{\ast})}$$

which  yields $s^{\ast} = 2,4.$

Using the law of sines  we get
$$\frac{ \sin \alpha_{i} }{\sqrt{12}} = \frac{\sin \varphi_{i}}{s_{i}}  $$

 if   $ i \rightarrow \infty$ then

$$\frac{ {\sqrt{3}}/{2} }{\sqrt{12}} = \frac{\sin \varphi^{\ast}}{4}  $$

that is  $$ \sin \varphi^{\ast} = 1$$
and so finally 
$$ \varphi^{\ast} = \frac{\pi}{2}$$

\end{proof}

\end{document}